\documentclass[twoside,leqno]{article}
\usepackage[letterpaper]{geometry}
\usepackage{ltexpprt}
\usepackage{cite}
\usepackage{caption}
\usepackage{subcaption}
\usepackage{float}
\usepackage{chngcntr}
\usepackage{apptools}
\usepackage{cancel}
\usepackage{graphicx}
\usepackage{amsmath}
\usepackage{amssymb}
\usepackage{amsfonts}
\usepackage{mathrsfs}
\usepackage{thmtools}
\usepackage{url}
\usepackage{color}
\usepackage{hyperref}
\usepackage{enumitem}
\hypersetup{colorlinks=true,linkcolor=[rgb]{0,0,0},citecolor=[rgb]{0,0,0},urlcolor=black}

\allowdisplaybreaks

\newcommand{\floor}[1]{\left\lfloor #1\right\rfloor}
\newcommand{\ceil}[1]{\left\lceil #1\right\rceil}
\newcommand{\ang}[1]{\left\langle #1 \right\rangle}

\newcommand{\sizeof}[1]{\vert #1 \vert}
\newcommand{\RE}{\mathbb{R}}
\newcommand{\XX}{\mathcal{X}}
\newcommand{\eps}{\varepsilon}

\newcommand{\ST}{\,:\,}
\DeclareMathOperator{\diam}{diam}

\DeclareMathOperator{\vol}{vol}

\DeclareMathOperator{\interior}{int}

\newcommand{\inv}[1]{\frac{1}{#1}}

\newcommand{\bd}{\partial}

\begin{document}

\title{Convex Approximation and the Hilbert Geometry}

\author{%
    Ahmed Abdelkader\thanks{%
        Google LLC, 
        Mountain View, California 94043,
        USA;
        \texttt{abdelkader@google.com}. 
        Research conducted while at the University of Maryland, College Park.
    }	  
    \and
    David M. Mount\thanks{%
	Department of Computer Science and 
	Institute for Advanced Computer Studies,
	University of Maryland,
	College Park, Maryland 20742,
        USA;
	\texttt{mount@umd.edu}.
    }
}
\date{}

\maketitle

\begin{abstract}
The efficient representation of convex bodies in multi-dimensional spaces is a fundamental problem in computational geometry. Several key developments were recently brought about using a number of constructions utilizing Macbeath regions. In this paper, we present a novel intrinsic approach for approximate membership testing, where we carry out the entire development based on structures derived from the Hilbert metric associated with a convex body $K$ in $\RE^d$. First, we revisit the construction of economical Delone sets, deriving the size bound based on the notion of volume entropy. Second, we design a new query structure based on a simple covering by ellipsoids, where queries are answered by ray shooting. As an added bonus, the intrinsic viewpoint facilitates finger searching, where the query time can be bounded by the distance traveled in the Hilbert metric.

\medskip
\noindent\textbf{Keywords:} Convex bodies, Hilbert geometry, Macbeath regions, Membership testing.
\end{abstract}

\maketitle

\section{Introduction} \label{intro.sec}
Imre B\'{a}r\'{a}ny famously quipped, \emph{``Everything interesting that can happen to a convex body happens near its boundary''}~\cite{Bar07}. Evidently, many works on the representation and approximation of convex bodies concern their boundaries. A classical result by Dudley~\cite{Dud74} established that, for any convex body $K$ in $\RE^d$, it is possible to construct an $\eps$-approximating polytope under the Hausdorff distance with $O(1/\eps^{(d-1)/2})$ facets. The facets defining the Dudley approximation are obtained using a uniform sample on an enclosing sphere, which is then projected onto the boundary $\partial K$ yielding a collection of tangent hyperplanes. The same result was also achieved by the dual construction of Bronshteyn and Ivanov~\cite{BrI76}.

Much earlier than Dudley's classical result, Hilbert defined a metric on the \emph{interior} of a convex body~\cite{Hil95}. The Hilbert metric (defined in Section~\ref{hilbert.sec}) has a number of nice properties. It is invariant under projective transformations, and when the body is a Euclidean ball, it generalizes the Cayley-Klein model of hyperbolic geometry. (See the handbook on Hilbert geometry by Papadopoulos and Troyanov~\cite{PaT14}.) This intrinsic point of view provides new insights into classical questions from convexity theory. Despite its natural appeal, there has been little work on the Hilbert geometry in the algorithm design community.

In this paper, we adopt the intrinsic point of view in revisiting the problem of polytope membership testing. The significance of this problem is highlighted by its critical role in several recent developments in geometric approximation algorithms (see, e.g., \cite{AFM17a, AFM17c, AbM18, AAFM20}). Polytope membership queries, both exact and approximate, arise in many application areas, such as linear programming and ray-shooting queries~\cite{Mat93a,MaS93,Cha96a,Ram00}, nearest neighbor searching and the computation of extreme points~\cite{Cla94,Cha96b,AHV04}, collision detection~\cite{EGS99}, and machine learning~\cite{Bur98}.

It is well known that polytope membership is dual to answering halfspace emptiness queries. For $n$ points in $\RE^d$, the fastest exact data structure with near-linear space for $d > 3$ has a query time of roughly $O\big(n^{1-1/\floor{d/2}}\big)$~\cite{Mat92}, which is prohibitive for many applications. Hence, it is natural to consider the problem in an approximate setting.

\begin{figure}[htbp]
\centering
  \includegraphics[scale=0.4]{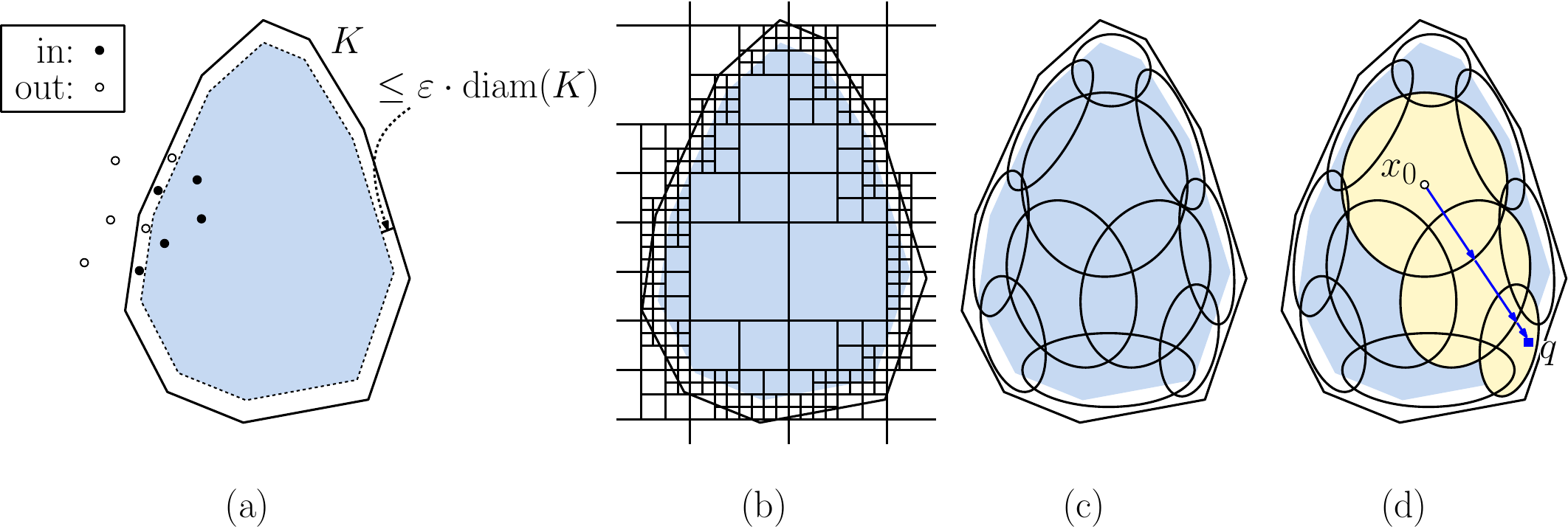}
  \caption{Approximate Polytope Membership (APM).}
  \label{apx-membership.fig}
\end{figure}

Let $\eps$ be a positive real parameter, and let $\diam(K)$ denote $K$'s diameter. Given a query point $q \in \RE^d$, an \emph{$\eps$-approximate polytope membership query} ($\eps$-APM query) returns a negative result if $q \notin K$, a positive result if $q \in K$ and its distance from the boundary of $K$ is greater than $\eps \cdot \diam(K)$, and it may return either result otherwise; see Figure~\ref{apx-membership.fig}(a). (Such an \emph{inner approximation} is a bit more convenient in our framework, but our approach can easily be adapted to answer APM queries with respect to an \emph{outer approximation}.) Our objective is to build an efficient data structure that can answer such queries in $O(\log \inv{\eps})$ time with a storage matching the optimal bound by Dudley.

Recent progress on $\eps$-APM queries arose from the application of Macbeath regions~\cite{Mac52}. In contrast to more traditional space partitioning techniques, such as quadtrees (see Figure~\ref{apx-membership.fig}(b)), Macbeath regions provide shape-sensitive covers enabling for the first time an optimal approach with $O(\log 1/\eps)$ query time and $O(1/\eps^{(d-1)/2})$ storage~\cite{AFM17a}. The data structure of~\cite{AFM17a} is based on a hierarchy of rings providing progressively better approximations of the boundary, where queries are answered by shooting rays from a center point and tracing intersections with the rings.

Later on, the incorporation of the Hilbert metric was explored in~\cite{AbM18}, allowing a more straightforward solution with the same optimality guarantees. The data structure of~\cite{AbM18} disposes of the ray shooting employed in~\cite{AFM17a} in favor of a simple hierarchical covering of the entire convex body (see Figure~\ref{apx-membership.fig}(c)). Intuitively, the hierarchical cover proposed in~\cite{AbM18} is based on progressively finer approximations, where queries are answered by descending through the hierarchy, tracing smaller covers containing the query point. Notably, those covers were related in a precise sense to Delone sets in the Hilbert metric, enabled by the intimate relationship between Macbeath regions and metric balls in the Hilbert geometry. A Delone set consists of a set of points that have useful packing and covering properties with respect to metric balls. However, the great potential of the Hilbert geometry was not fully utilized. Mainly, critical proofs reverted to the machinery of Macbeath regions, hindering further insights on convex approximation.

In this paper, we combine the best of the two approaches~\cite{AFM17a, AbM18} by presenting a streamlined and self-contained development of an optimal data structure for $\eps$-APM queries. The key idea is to replace the hierarchical layer-based structure with a purely \emph{flat} structure based on covering the convex body with ellipsoids. As in \cite{AFM17a}, APM queries are answered by shooting a ray from the origin towards the query point, but in our structure ray-shooting involves simply \emph{walking} the ray through the ellipsoids of the cover, see Figure~\ref{apx-membership.fig}(d). We effectively combine the ray shooting approach from~\cite{AFM17a} with the covering approach of~\cite{AbM18}.

We take the development a step further, obtaining a new result related to finger searching (see Lemma~\ref{apm-finger-search.lem}). In the field of data structures, a \emph{finger search} employs a pointer to a location within a data structure, called a \emph{finger}, and the query time is sensitive to the distance between the location of the finger and the query point. (See Brodal's survey on this topic \cite{Bro05}.) As an application, suppose that rather than performing a single APM query, we are interested in performing a series of $m$ queries along a path, $q_1, \ldots q_m$, where each point $q_{i-1}$ is close to its successor $q_i$. Notably, this scenario is fundamental to random walks in convex bodies where multiple \emph{move proposals} within a given neighborhood are checked at each step~\cite{KLM97,RH09}. While a sequence of $m$ queries could be answered in total time $O(m \log \inv{\eps})$ by querying each point individually, finger search provides a faster solution. Rather than starting each search from scratch, the search for $q_i$ is initiated where the latest search $q_{i-1}$ left off. We show that our new APM data structure supports efficient finger searches, where the time taken to answer an APM query for a point $q_i$ given a finger for point $q_{i-1}$ is $O(1 + d_K(q_{i-1},q_i))$, where $d_K$ denotes the Hilbert distance relative to the body $K$. (See Section~\ref{hilbert.sec} for definitions.) We also show that if the query points lie within an $\eps$-erosion of $K$, the Hilbert distance between any two of them is never larger than $O(\log \inv{\eps})$. This implies that finger searching is never slower, and potentially much faster, than answering queries individually.

Together, our main result is summarized as follows.

\begin{theorem} \label{main.thm}
Given a convex body $K$ and $\eps > 0$, there exists a data structure of space $O(1/\eps^{(d-1)/2})$ that answers $\eps$-approximate polytope membership queries in time $O(\log 1/\eps)$. The data structure supports \emph{finger searching} in $O(1 + \Delta)$, where $\Delta$ is the distance between consecutive query points in the Hilbert metric defined on $K$, and can be constructed in time $O(n + 1/\eps^d)$ when $K$ is the intersection of $n$ halfspaces.
\end{theorem}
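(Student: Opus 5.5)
The plan is to assemble Theorem~\ref{main.thm} from three ingredients: a Delone set in the Hilbert metric, a flat cover of an eroded body by ellipsoids derived from that Delone set, and a ray-walking query procedure.

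\textbf{Normalization and region of interest.} First I normalize $K$ by an affine map so that it is in canonical position: the origin is interior and $B(0,1) \subseteq K \subseteq B(0,d)$ (John's theorem). This changes neither Hilbert distances nor the APM problem, up to constants in $\eps$. Let $K_\eps$ be the inner body obtained by shrinking $K$ about the origin by a factor $1-c\eps$. Every point of $K_\eps$ has Hilbert distance $O(\log\inv{\eps})$ from the origin. More generally, for $p,q \in K_\eps$ the cross-ratio definition gives $d_K(p,q) \le \tfrac12\ln\bigl((1+O(1/\eps))^2\bigr) = O(\log\inv{\eps})$, because each boundary distance along the chord is $\Omega(\eps)$ relative to the chord length. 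This bound is what later shows that finger search is never slower than a fresh query.

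\textbf{Delone set and cover.} Fix a constant radius $r$ and take a maximal $r$-separated set $X$ in $K_\eps$ with respect to $d_K$. By maximality, $X$ is $r$-covering. I relate each Hilbert ball $B_H(x,r)$ to the Macbeath region $M(x)$ and then to its John ellipsoid $E(x)$, sandwiched between constant scalings. This gives two properties. First, the ellipsoids $\{E(x)\}$ suitably enlarged cover $K_\eps$. Second, any ellipsoid meets only $O(1)$ others, since Hilbert balls of radius $r$ packed into a ball of radius $O(r)$ number $O(1)$ by a volume comparison of Macbeath regions.

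The size bound $|X| = O(1/\eps^{(d-1)/2})$ is the main obstacle. I would get it via volume entropy: the number of disjoint radius-$r/2$ Hilbert balls inside the ball of radius $R = O(\log\inv{\eps})$ around the origin grows like $e^{(d-1)R/2}$, which is $\eps^{-(d-1)/2}$. To prove this, I would partition $K_\eps$ into $O(\log\inv{\eps})$ shells by Hilbert distance from the origin. Then I bound the number of balls in the shell at distance $t$ by comparing surface area against the cap-width volume bound (widths $\approx e^{-2t}$, caps of Macbeath volume $\approx \delta^{(d+1)/2}$), in the manner of the economical cap covering lemma. This yields a geometric series dominated by the outermost shell.

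\textbf{Query and finger search.} The structure stores the ellipsoids with their adjacency graph (overlapping pairs), $O(1)$ degree each, so the space is $O(|X|)$. For a query $q$, I shoot the ray from the origin toward $q$. Starting from the ellipsoid containing the origin, I walk to a neighbor containing the next stretch of the ray. The walk halts when $q$ lies in the current ellipsoid, answering ``in''. It also halts when the ray exits the union, answering ``out''; this is valid since the union lies in $K$ and contains $K_\eps$. Each step advances Hilbert distance along the ray by $\Omega(r)$, because each ellipsoid has Hilbert diameter $O(1)$ and a point of the ray is deep in some ellipsoid. So the number of steps is $O(d_K(0,q)) = O(\log\inv{\eps})$.

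For finger search from $q_{i-1}$, I walk along the segment $q_{i-1}q_i$ instead. Hilbert geodesics include straight segments, so the same argument gives $O(1 + d_K(q_{i-1},q_i))$ steps. If $q_i$ leaves the union, I detect exit at the boundary ellipsoids.

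\textbf{Construction time.} When $K$ is the intersection of $n$ halfspaces, I spend $O(n)$ on approximate canonicalization. I then place a grid of $O(1/\eps^d)$ candidate points in $K_\eps$ and greedily select the Delone set. Each candidate needs $O(1)$ work against a bucketed neighborhood once a constant-size outer approximation is fixed, so the total is $O(n + 1/\eps^d)$.
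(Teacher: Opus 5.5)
\textbf{Main gap: the size bound.} Your overall architecture is a sound variant of the paper's: a Hilbert-metric net instead of a maximal packing of $E''$ ellipsoids, a walk through the ellipsoid adjacency graph, and finger search along the segment. The covering, degree and per-step progress arguments go through. The gap is in the step you yourself flag as the main obstacle, $|X| = O(1/\eps^{(d-1)/2})$.

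Your shell argument divides the $O(\delta)$ volume of the shell at depth $\delta \approx e^{-2t}$ by a per-region volume ``$\approx \delta^{(d+1)/2}$''. That uniform estimate holds for smooth bodies of bounded curvature. It fails for general convex bodies, and polytopes are the main case here. In the cube $[0,1]^d$, the points $(\delta,\tfrac12,\ldots,\tfrac12)$ and $(\delta,\ldots,\delta)$ both lie at depth $\delta$. Their Macbeath regions are $[0,2\delta]\times[0,1]^{d-1}$ and $[0,2\delta]^d$, with volumes $\Theta(\delta)$ and $\Theta(\delta^d)$. Volume division therefore only gives $O(\delta^{1-d})$ per shell, i.e.\ $O(1/\eps^{d-1})$ overall.

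The economical cap covering is not proved by such a division. It needs an ingredient that handles exactly this non-uniformity. One option is B\'ar\'any's volume-based cap covering. Another is a primal--dual argument that pairs each cap with its polar cap (volume product $\Omega(\delta^{d+1})$) and sums over $K$ and its polar with Cauchy--Schwarz.

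So you must either invoke the width-based economical cap cover as a black box, as the earlier works did, or argue as in Theorem~\ref{size-bound.thm}. That proof:
\begin{itemize}
\item cites the Benoist--Hulin/Tholozan bound $\vol(B_K(O,r)) = O(e^{(d-1)r})$;
\item uses Lemma~\ref{hilbert-radius.lem} to place $K_{\eps/2}$ in a Hilbert ball of radius $\tfrac12\ln\frac{2}{\eps\gamma}$;
\item counts disjoint constant-radius Hilbert balls via Corollary~\ref{macbeath-to-hilbert.cor}.
\end{itemize}
Separately, your ``$e^{(d-1)R/2}$ with $R = O(\log\inv{\eps})$'' leaves unspecified the one constant that fixes the exponent. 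With the $\tfrac12$ in $d_K$, the radius is $\tfrac12\ln\inv{\eps}+O(1)$ and the growth is $e^{(d-1)R}$.

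\textbf{Secondary gap: construction time.} The argument does not work as stated. A constant-size outer approximation cannot determine Macbeath ellipsoids or Hilbert distances at points of depth $\eps$, since these depend on $K$ at resolution $\eps$. You need an $O(\eps)$-approximation, which in the worst case has $\Theta(1/\eps^{(d-1)/2})$ facets. Each candidate then costs time proportional to that size, not $O(1)$.

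The paper's Lemma~\ref{apm-construction.lem} proceeds exactly this way: it builds an $\tfrac{\eps}{2}$-approximation, then does a linear-time ellipsoid computation per grid point. It obtains only $O(n + 1/\eps^{O(d)})$. So the $1/\eps^d$ term in the statement is not established by the paper's own lemma either, and you should not claim it via an $O(1)$-per-candidate argument.
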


The rest of the paper is organized as follows. In Section~\ref{prelim.sec}, we present basic definitions and preliminary results. Then, in Section~\ref{macbeath-delone.sec}, we present a self-contained development of the Macbeath-based Delone sets in the Hilbert metric, first presented in~\cite{AbM18}, where we introduce a new and more direct proof of the upper bound on the cardinality of the Delone set. In Section~\ref{apm.sec}, we proceed to apply the Delone set to design a new data structure, combining and significantly simplifying the two prior approaches from~\cite{AFM17a} and~\cite{AbM18}. Finally, we conclude in Section~\ref{conc.sec}.

\section{Preliminaries} \label{prelim.sec}

We consider real $d$-dimensional space, $\RE^d$, where $d$ is a fixed constant. Throughout, we use $K$ to denote a convex body in $\RE^d$. Let $\bd K$ and $\interior K$ denote $K$'s boundary and interior, respectively. Our space and query-time bounds hold irrespective of $K$'s representation, e.g., as the intersection of $n$ closed halfspaces. (The choice of representation will only affect the preprocessing times.)

Let $O$ denote the origin of $\RE^d$. For $\delta > 0$, let $B(x,\delta)$ denote the Euclidean ball of radius $\delta$ centered at $x \in \RE^d$, and let $B(\delta) = B(O,\delta)$. Given a vector $v \in \RE^d$, let $\|v\|$ denote its Euclidean length. 

\subsection{The Hilbert Metric.} \label{hilbert.sec}
The Hilbert metric~\cite{Hil95} was introduced over a century ago by David Hilbert. We review the definition and basic properties we use for our data structure; see the survey by Papadopoulos and Troyanov \cite{PaT14} and the multimedia contribution by Nielsen and Shao~\cite{NiS17}.

A \emph{Hilbert metric} $(K,d_K)$ consists of a convex body $K$ in $\RE^d$ and a distance function defined on its points. For any pair of distinct points $x, y \in K$, let $x'$ and $y'$ denote the endpoints of the chord of $K$ through $x$ and $y$, labeled so the points appear in the order $\ang{x', x, y, y'}$ (see Figure~\ref{hilbert.fig}(a)). Using $|x y|$ as shorthand for the Euclidean distance $\|x - y\|$, the Hilbert distance $d_K$ is defined as
\[
  d_K(x, y)
	~ = ~ \frac{1}{2} \ln \left( \frac{| x' y |}{| x' x |} \frac{| x y' |}{| y y' |} \right).
\]
The Hilbert distance $d_K$ is a metric and line segments are geodesics. Observe that if we fix either $x$ or $y$ and move the other to $\partial K$, $d_K(x,y)$ increases to $\infty$.

\begin{figure}[htbp]
\centering
  \includegraphics[scale=0.4]{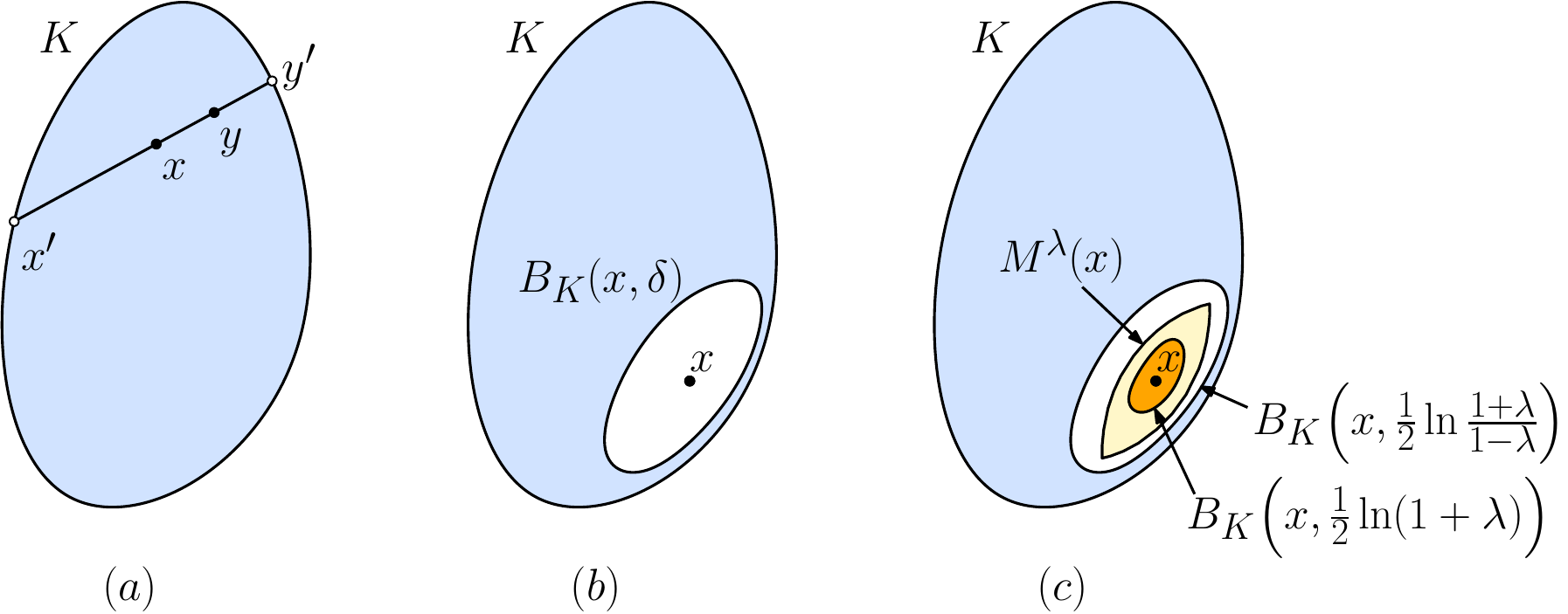}
  \caption{(a) The Hilbert metric, (b) Hilbert metric ball, and (c) Lemma~\ref{macbeath-to-hilbert.lem}.}
  \label{hilbert.fig}
\end{figure}

The Hilbert distance has a number of interesting properties. It is invariant under projective transformations, and when $K$ is a unit ball, the Hilbert distance is equal (up to a constant factor) to the distance between points in the Cayley-Klein model of hyperbolic geometry. Given a point $x \in K$ and $\delta > 0$, let 
\[
    B_K(x,\delta) ~ = ~ \{ y \in K \ST d_K(x, y) \leq \delta \}
\]
denote the \emph{Hilbert ball} of radius $\delta$ about $x$ (see Figure~\ref{hilbert.fig}(b)). Hilbert balls are convex. When $K$ is a polytope with $n$ facets in $\RE^d$, a Hilbert ball $B_K$ is a polytope with at most $n(n-1)$ facets~\cite{GeM21}, and the bound improves to $2 n$ when $d = 2$~\cite{NiS17}.

\subsection{Preconditioning and Perturbation.} \label{canonical_perturbation.sec}
For $0 < \gamma \le 1$, we say that $K$ is in \emph{$\gamma$-canonical form} if $B(\gamma/2) \subseteq K \subseteq B(1/2)$ (see Figure~\ref{prelim.fig}(a)). Clearly, such a body has a diameter between $\gamma$ and $1$. We say that a body is in \emph{canonical form} if it is in $\gamma$-canonical form for some constant $\gamma$ (possibly depending on the dimension).

\begin{figure}[htbp]
\centering
  \includegraphics[scale=0.4]{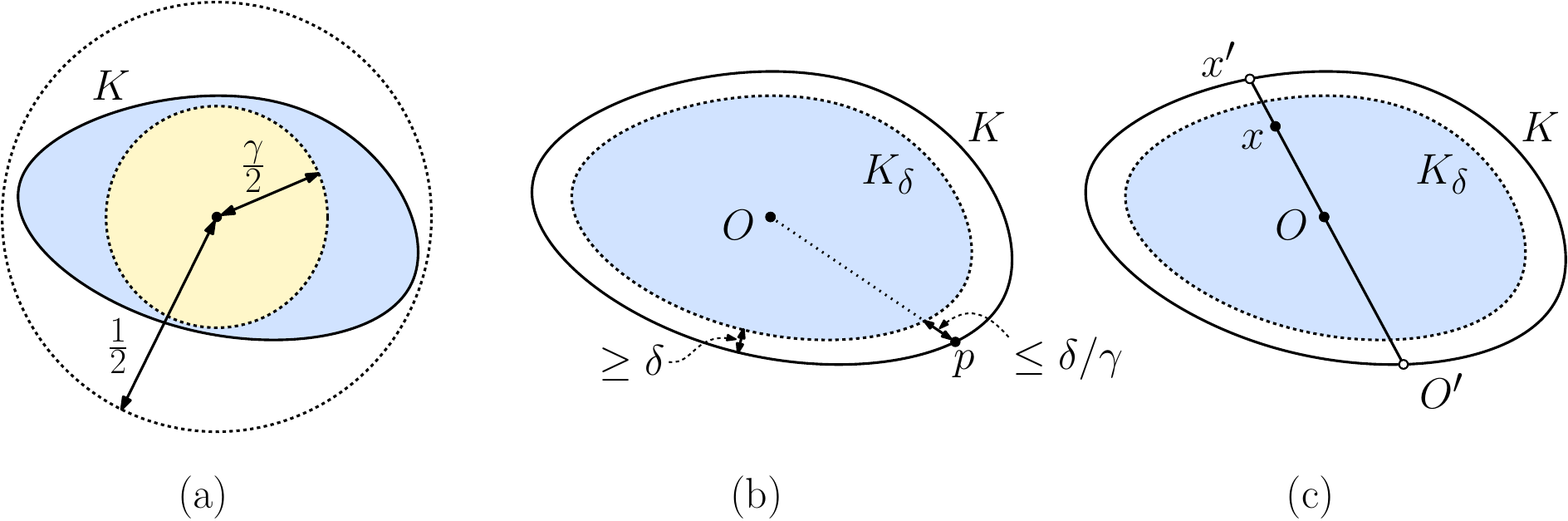}
  \caption{(a) A convex body $K$ in $\gamma$-canonical form, (b) the $\delta$-erosion $K_{\delta}$, and (c) Lemma~\ref{hilbert-radius.lem}.}
  \label{prelim.fig}
\end{figure}

It is possible to compute, in $O(n)$ time, a non-singular affine transformation $T$ such that $T(K)$ is in $(1/d)$-canonical form~\cite{Har01, AFM17c}. Further, if a convex body $P$ is within Hausdorff distance $\eps$ of $T(K)$, then $T^{-1}(P)$ is within Hausdorff distance at most $d \kern+1pt \eps$ of $K$. Therefore, for the sake of approximation with respect to Hausdorff distance, we may assume that $K$ has been mapped to canonical form, and $\eps$ is scaled by a factor of $1/d$. Because we assume $d$ is a constant, this transformation only affects the constant factors in our analysis. As a consequence, the resulting approximation is directionally sensitive, being more accurate along directions where $K$ is skinnier (see, e.g., \cite{AFM17b}).

A number of our constructions involve perturbing the body $K$ by means of a small contraction. Given a convex body $K$ in $\gamma$-canonical form, and a parameter $\delta > 0$, define $K$'s $\delta$-erosion as the scaled copy $K_{\delta} = (1 - 2 \delta/\gamma)K$. Because every point of $K$ is at distance at least $\gamma/2$ and at most $1/2$ from the origin, the following result is immediate~\cite{AAFM20} (see Figure~\ref{prelim.fig}(b)).

\begin{lemma} \label{erosion_gap.lem}
The Hausdorff distance between $K$ and $K_{\delta}$ is at least $\delta$ and at most $\delta/\gamma$.
\end{lemma}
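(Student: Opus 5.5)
The plan is to pass to support functions. Write $h_K(u) = \max_{x \in K} \langle x, u\rangle$ for unit vectors $u$. Since $K_\delta = \lambda K$ with $\lambda = 1 - 2\delta/\gamma$, we have $h_{K_\delta} = \lambda h_K$. For nested convex bodies $K_\delta \subseteq K$ (containment holds because $O \in K$ and $0 < \lambda \le 1$; we implicitly assume $\delta < \gamma/2$ so that $\lambda > 0$), the Hausdorff distance is
\[
\sup_{\|u\|=1} \bigl(h_K(u) - h_{K_\delta}(u)\bigr) ~ = ~ \sup_{\|u\|=1} \frac{2\delta}{\gamma}\, h_K(u).
\]
So the whole statement reduces to bounding $h_K(u)$ from above and below using $\gamma$-canonical form.

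For the upper bound, $K \subseteq B(1/2)$ gives $h_K(u) \le 1/2$ for every $u$. Hence the Hausdorff distance is at most $(2\delta/\gamma)(1/2) = \delta/\gamma$.

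For the lower bound, $B(\gamma/2) \subseteq K$ gives $h_K(u) \ge \gamma/2$ for every $u$. Hence the supremum is at least $(2\delta/\gamma)(\gamma/2) = \delta$.

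Alternatively, one can argue directly with points, and this is equivalent. For the lower bound, take $x \in \bd K$ realizing $h_K(u)$ in some direction $u$. The point $x$ lies at distance at least $h_K(u) - \lambda h_K(u) \ge \delta$ from the halfspace $\{\langle y,u\rangle \le \lambda h_K(u)\}$, which contains $K_\delta$. For the upper bound, every $x \in K$ is within $\|x - \lambda x\| = (1-\lambda)\|x\| \le (2\delta/\gamma)(1/2) = \delta/\gamma$ of the point $\lambda x \in K_\delta$, while every point of $K_\delta$ already lies in $K$.

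There is no real obstacle here. The only point needing care is justifying the support-function formula for the Hausdorff distance of nested bodies, or equivalently using the halfspace argument for the lower bound, which I would state explicitly.
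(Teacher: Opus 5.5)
Your proof is correct and uses the same idea the paper gives as its one-line justification. Scaling about $O$ by $1-2\delta/\gamma$ moves a boundary point at distance $r\in[\gamma/2,1/2]$ from $O$ by $(2\delta/\gamma)r\in[\delta,\delta/\gamma]$. Your identity that the Hausdorff distance equals $(2\delta/\gamma)\max_{\|u\|=1} h_K(u)$ and your supporting-halfspace argument make this exact, in particular the lower bound, which the paper's radial-displacement remark glosses over.
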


To further characterize the errors due to the perturbation in terms of the Hilbert metric, we show that the Hilbert ball of radius $\inv{2} \ln \inv{\delta} + O(1)$ centered at the origin is nested between $K$ and $K_{\delta}$.

\begin{lemma}\label{hilbert-radius.lem}
Given a convex body $K$ in $\gamma$-canonical form and $\delta > 0$, 
\[
    K_\delta 
        ~ \subseteq ~ B_K\kern-2pt\left( O, \, \inv{2} \ln \inv{\delta\gamma} \right) 
        ~ \subseteq ~ K.
\]
\end{lemma}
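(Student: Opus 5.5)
The second inclusion holds by definition, since $B_K(x,\rho)$ was defined as a subset of $K$. So the whole content is the first inclusion. I would prove it directly from the definition of $d_K$, by bounding the two cross-ratio factors for a point of $K_\delta$ along the ray from the origin. I assume $\delta < \gamma/2$ so that the scale factor $1 - 2\delta/\gamma$ is positive; otherwise $K_\delta$ degenerates and the statement is vacuous or trivial.

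Fix $x \in K_\delta$ with $x \neq O$; the case $x = O$ is immediate. Consider the chord of $K$ through $O$ and $x$. Let $y'$ be its endpoint on the side of $x$ and $x'$ the endpoint on the opposite side, so the order is $\ang{x', O, x, y'}$. Set $r = |O y'|$ and $r' = |O x'|$. Because $K$ is in $\gamma$-canonical form, $\gamma/2 \le r, r' \le 1/2$. Because $K_\delta = (1-2\delta/\gamma)K$ and $y' \in \bd K$, we can write $x = t\,y'$ with $0 < t \le 1 - 2\delta/\gamma$. By definition,
\[
  d_K(O,x) ~=~ \frac12 \ln\left( \frac{|x' x|}{|x' O|}\cdot \frac{|O y'|}{|x y'|}\right).
\]

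I bound the two factors separately. For the first, $|x'x| = r' + t r \le r' + r$ and $|x'O| = r'$. Hence the factor is at most $1 + r/r' \le 1 + (1/2)/(\gamma/2) = 1 + 1/\gamma \le 2/\gamma$, using $\gamma \le 1$. For the second, $|Oy'| = r$ and $|xy'| = (1-t) r$, so the factor equals $1/(1-t) \le \gamma/(2\delta)$.

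Multiplying, the argument of the logarithm is at most $(2/\gamma)(\gamma/(2\delta)) = 1/\delta \le 1/(\delta\gamma)$. Therefore $d_K(O,x) \le \frac12 \ln \frac{1}{\delta\gamma}$, which places $x$ in the Hilbert ball. There is no real obstacle here. The only points needing care are the correct orientation of the chord endpoints, so that the factor $1/(1-t)$ appears on the correct side, and using both the inner radius $\gamma/2$ and the outer radius $1/2$ to control $r/r'$. The argument in fact gives the slightly stronger bound $\frac12\ln\frac1\delta$.
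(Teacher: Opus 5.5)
This is correct and is essentially the paper's proof. Both arguments take the chord through $O$ and $x$, expand $d_K(O,x)$ from the definition, and bound the cross-ratio using the canonical-form radii $\gamma/2$ and $1/2$. The only difference is that you pair the factors by endpoint and use the exact ratio $|Oy'|/|xy'| = 1/(1-t)$. The paper instead uses the cruder bounds $|xy'| \ge \delta$ and $|x'x| \le 1$ (in your notation), which is why you get the sharper radius $\frac12\ln\frac1\delta$.

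One small correction to your aside: for $\delta > \gamma/2$ the set $(1-2\delta/\gamma)K$ is a reflected copy of $K$, and once $\delta$ is large it can leave the Hilbert ball and even $K$ itself. So that regime is not vacuous or trivial. It is simply outside the intended range $\delta < \gamma/2$, which the paper also tacitly assumes.
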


\begin{proof}
The second inclusion is trivial, since $B_K(O, r) \subset K$ for any finite $r$. To prove the first inclusion, consider any $x \in K_{\delta} = (1 - 2 \delta/\gamma)K$, where $x \neq O$. Let $O'$ and $x'$ be the endpoints on $\bd K$ of the chord passing through $O$ and $x$, as in the definition of the Hilbert distance (see Figure~\ref{prelim.fig}(c)). Since $x \in K_{\delta}$, we have $\|x' - x\| \geq \delta$. Since $K$ is in $\gamma$-canonical form, we have $\|O' - x\| \leq 1$, $\|O' - O\| \geq \frac{\gamma}{2}$, and $\|x' - O\| \leq \frac{1}{2}$. Therefore,
\[
    d_K(O, x)
	~ =    ~ \frac{1}{2} \ln \left( \frac{\|O' - x\|}{\|x' - x\|} \frac{\|x' - O\|}{\|O' - O\|} \right)
        ~ \leq ~ \frac{1}{2} \ln \left( \frac{1}{\delta} \cdot \frac{1/2}{\gamma/2} \right)
        ~ =    ~ \frac{1}{2} \ln \frac{1}{\delta\gamma}.
\]
Assuming $\gamma$ is a constant, this is $\inv{2} \ln \inv{\delta} + O(1)$.
\end{proof}

\subsection{Macbeath Regions.} \label{macbeath.sec}
Our algorithms and data structures will involve packings and coverings by ellipsoids. These ellipsoids will be based on a classical concept from convexity theory, called \emph{Macbeath regions}, which were described first by A.\ M.\ Macbeath \cite{Mac52}. They have found uses in diverse areas (see, e.g., B{\'a}r{\'a}ny's survey~\cite{Bar00}). Given a convex body $K$, a point $x \in K$, and a real parameter $\lambda \ge 0$, the \emph{$\lambda$-scaled Macbeath region} at $x$, denoted $M^{\lambda}_K(x)$, is defined to be
\[
  x + \lambda ((K-x) \cap (x-K)).
\]
When $\lambda = 1$, it is easy to verify that $M^{1}_K(x)$ is the intersection of $K$ and the reflection of $K$ around $x$ (see Figure~\ref{macbeath_john.fig}(a)), and hence it is centrally symmetric about $x$. $M^{\lambda}_K(x)$ is a scaled copy of $M^{1}_K(x)$ by the factor $\lambda$ about $x$. We refer to $x$ and $\lambda$ as the \emph{center} and \emph{scaling factor} of $M^{\lambda}_K(x)$, respectively. To simplify the notation, when $K$ is clear from the context, we often omit explicit reference in the subscript. When $\lambda < 1$, we say $M^{\lambda}(x)$ is \emph{shrunken}.

\begin{figure}[htbp]
\centering
\begin{subfigure}[b]{0.42\columnwidth}\centering
  \includegraphics[scale=0.42]{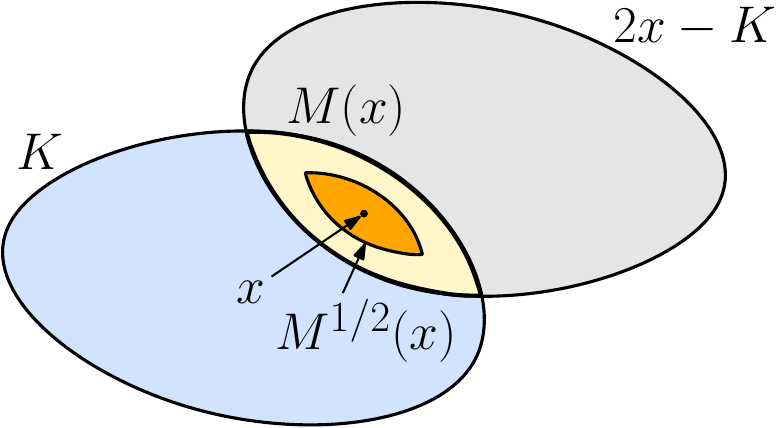}
  \caption{}
\end{subfigure}
\begin{subfigure}[b]{0.42\columnwidth}\centering
  \includegraphics[scale=0.42]{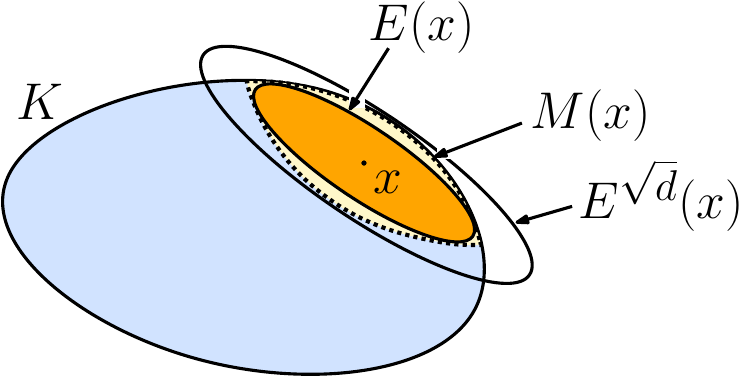}
  \caption{}
\end{subfigure}
  \caption{(a) Macbeath regions and (b) Macbeath ellipsoids.}
  \label{macbeath_john.fig}
\end{figure}

To provide some intuition as to why this is related to the Hilbert geometry consider any chord of $K$, and (recalling Figure~\ref{hilbert.fig}(a)) place two points $x$ and $y$ on this chord such that $x$ is closer to $y'$ than to $x'$, and $y$ is at the midpoint of $x y'$. It is easy to see that $y$ lies on the boundary of $M^{1/2}_K(x)$. It is also at constant Hilbert distance from $x$ (ranging from $\ln \sqrt{2}$ to $\ln \sqrt{3}$ depending on how close $x$ is to $y'$). In general, a constant-factor shrinking of a Macbeath region corresponds roughly a to Hilbert ball of constant radius (see Figure~\ref{hilbert.fig}(c)). This is established formally in the following lemma from~\cite{AbM18}, which generalizes a result of Vernicos and Walsh~\cite{VeW16} to arbitrary values of $\lambda$. For example, for $\lambda = 1/5$, we have $B_K(x, 0.09) \subseteq M^{1/5}(x) \subseteq B_K(x, 0.21)$ for all $x \in \interior(K)$. The proof involves straightforward manipulations of the definitions and is provided in Appendix~\ref{deferred.sec} for completeness.

\begin{restatable}{lemma}{MacbeathToHilbert} \label{macbeath-to-hilbert.lem}
Given a convex body $K \subset \RE^d$, for all $x \in \interior(K)$ and any $0 \leq \lambda < 1$,
\[
  B_K\kern-3pt\left( x, \frac{1}{2}\ln{(1 + \lambda)} \right) 
	~ \subseteq ~ M^{\lambda}(x) 
	~ \subseteq ~ B_K\kern-3pt\left( x, \frac{1}{2}\ln{\frac{1 + \lambda}{1 - \lambda}} \right) .
\]
\end{restatable}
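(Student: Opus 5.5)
Both inclusions reduce to a one-dimensional computation along a single chord through $x$. Fix $x \in \interior(K)$ and a direction $u$. The line through $x$ in direction $u$ meets $\bd K$ at two points $p, q$, with $x$ between them. Write $a = |xp|$ and $b = |xq|$, and assume without loss of generality $a \le b$.

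First we determine $M^{\lambda}(x)$ on this line. The set $(K-x)\cap(x-K)$ restricted to the line is the symmetric interval of half-length $\min(a,b) = a$. Hence $M^{\lambda}(x) \cap \ell$ is the segment of points $y = x + t u$ with $|t| \le \lambda a$. Since both $M^{\lambda}(x)$ and Hilbert balls are star-shaped about $x$ (Hilbert balls are convex), inclusion of sets is equivalent to comparing, along every line through $x$, the extent of each set on both sides of $x$.

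Next we compute the Hilbert distance from $x$ to a point $y$ at distance $t$ from $x$ on this line. Consider first $y$ toward $p$, with $0 \le t < a$. Then $x' = q$ and $y' = p$, so
\[
d_K(x,y) = \tfrac12 \ln\frac{(b+t)\,a}{b\,(a-t)} .
\]
For $y$ toward $q$, the formula is the same with $a$ and $b$ swapped. Both expressions are increasing in $t$.

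For the right inclusion, take $t = \lambda a$ and bound both sides. Toward $p$ the distance is $\tfrac12\ln\bigl(\frac{b+\lambda a}{b}\cdot\frac{1}{1-\lambda}\bigr)$, and since $a \le b$ this is at most $\tfrac12 \ln\frac{1+\lambda}{1-\lambda}$. Toward $q$ the distance is $\tfrac12\ln\frac{(a+\lambda a)\,b}{a\,(b-\lambda a)} = \tfrac12\ln\frac{(1+\lambda)\,b}{b-\lambda a}$. This is at most $\tfrac12\ln\frac{1+\lambda}{1-\lambda}$, because $b - \lambda a \ge b(1-\lambda)$. By monotonicity in $t$, every point of $M^{\lambda}(x)$ lies within the claimed Hilbert radius.

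For the left inclusion, show that any $y$ with $t > \lambda a$, on either side, has $d_K(x,y) > \tfrac12\ln(1+\lambda)$; equivalently, at $t = \lambda a$ both distances are at least $\tfrac12\ln(1+\lambda)$. Toward $q$ we get $\frac{(1+\lambda)b}{b-\lambda a} \ge 1+\lambda$, which holds. Toward $p$ we get $\frac{b+\lambda a}{b(1-\lambda)} \ge \frac{1}{1-\lambda} \ge 1+\lambda$, which also holds. One also has to handle points beyond $p$, but these lie outside $K$ and so are outside the Hilbert ball anyway.

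The only delicate step is the bookkeeping: identifying the correct endpoint labels $x', y'$ for each side, and choosing which of $a \le b$ gives the worst case in each bound. The rest is elementary monotonicity. The case $\lambda = 0$ is trivial, since both sides reduce to $\{x\}$.
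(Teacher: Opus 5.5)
Your proof is correct and takes essentially the same route as the paper: both reduce to the chord through $x$ and compare the two ratio factors of $d_K$ with the Macbeath constraint along that chord. The paper's bookkeeping is a little leaner. It translates the two membership conditions $|y-x|\le\lambda|y'-x|$ and $|y-x|\le\lambda|x'-x|$ directly into the factor bounds $\frac{|y'-x|}{|y'-y|}\le\frac{1}{1-\lambda}$ and $\frac{|x'-y|}{|x'-x|}\le 1+\lambda$, and since each factor is at least $1$, this avoids your WLOG $a\le b$ case split and the monotonicity argument.
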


For the sake of efficiency, it will be useful to approximate Macbeath regions by shapes of constant combinatorial complexity. Given a Macbeath region $M_K(x)$, define its associated \emph{Macbeath ellipsoid} $E_K(x)$ as the maximum-volume inscribed ellipsoid (see Figure~\ref{macbeath_john.fig}(b)). Clearly, $E_K(x)$ is centered at $x$, and hence, can be defined equivalently as the maximum-volume ellipsoid centered at $x$ within $K$. It is well known that the maximum-volume ellipsoid contained within a convex body is unique, and it can be computed for a convex polytope in time linear in the number of its bounding halfspaces~\cite{ChM96}.

Let $E^{\lambda}_K(x)$ denote an $\lambda$-factor scaling of $E_K(x)$ about $x$. By John's Theorem (applied in the context of centrally symmetric bodies) it follows that $E^{\lambda}_K(x) \subseteq M^{\lambda}_K(x) \subseteq E^{\lambda\sqrt{d}}_K(x)$ \cite{Bal97}. As a direct consequence, we have the following adaptation of Lemma~\ref{macbeath-to-hilbert.lem}.

\begin{corollary}\label{macbeath-to-hilbert.cor}
Given a convex body $K \subset \RE^d$, for all $x \in \interior(K)$ and any $0 \leq \lambda < 1$,
\[
  B_K\kern-3pt\left( x, \frac{1}{2}\ln \left(1 + \frac{\lambda}{\sqrt{d}} \right) \right) 
	~ \subseteq ~ E^{\lambda}_K(x) 
	~ \subseteq ~ B_K\kern-3pt\left( x, \frac{1}{2}\ln{\frac{1 + \lambda}{1 - \lambda}} \right).
\]
\end{corollary}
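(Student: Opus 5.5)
The plan is to sandwich $E^{\lambda}_K(x)$ between two scaled Macbeath regions and then apply Lemma~\ref{macbeath-to-hilbert.lem} to each of them. The only other ingredient is the John-type containment already recorded above: $E^{\mu}_K(x) \subseteq M^{\mu}_K(x) \subseteq E^{\mu\sqrt{d}}_K(x)$ for every $\mu \ge 0$.

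For the upper inclusion, take $\mu = \lambda$. This gives $E^{\lambda}_K(x) \subseteq M^{\lambda}_K(x)$. Since $0 \le \lambda < 1$, the right-hand inclusion of Lemma~\ref{macbeath-to-hilbert.lem} applies directly and yields
\[
E^{\lambda}_K(x) \subseteq B_K\left(x, \tfrac12 \ln\tfrac{1+\lambda}{1-\lambda}\right).
\]

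For the lower inclusion, take $\mu = \lambda/\sqrt{d}$. The right-hand containment of the John sandwich gives $M^{\lambda/\sqrt{d}}_K(x) \subseteq E^{(\lambda/\sqrt{d})\sqrt{d}}_K(x) = E^{\lambda}_K(x)$. Since $d \ge 1$, we have $0 \le \lambda/\sqrt{d} < 1$, so the left-hand inclusion of Lemma~\ref{macbeath-to-hilbert.lem} applies with scaling factor $\lambda/\sqrt{d}$ and gives
\[
B_K\left(x, \tfrac12\ln\left(1+\tfrac{\lambda}{\sqrt{d}}\right)\right) \subseteq M^{\lambda/\sqrt{d}}_K(x).
\]
Chaining the two containments proves the left inclusion of the corollary.

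The one point needing care is that the John containment holds for every scaling factor, not only for $\lambda = 1$. This is immediate: $E^{\mu}_K(x)$ and $M^{\mu}_K(x)$ are both obtained from the $\mu=1$ objects by homothety about the common center $x$, and homothety preserves inclusion. The containment at $\mu = 1$ itself is John's theorem for the centrally symmetric body $M_K(x)$, with $E_K(x)$ as its maximum-volume inscribed ellipsoid, which the paper takes as given. The degenerate case $\lambda = 0$ is trivial, since all sets reduce to $\{x\}$.
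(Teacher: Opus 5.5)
Your proof is correct and follows the paper's route: the paper derives the corollary directly from the John sandwich $E^{\lambda}_K(x) \subseteq M^{\lambda}_K(x) \subseteq E^{\lambda\sqrt{d}}_K(x)$ combined with Lemma~\ref{macbeath-to-hilbert.lem}, which you apply at scale $\lambda$ for the outer inclusion and at scale $\lambda/\sqrt{d}$ for the inner one. Your remark that the sandwich holds at every scale because both sets are homothetic images about $x$ supplies the one detail the paper leaves implicit.
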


\section{Delone Sets in the Hilbert Metric} \label{macbeath-delone.sec}

In this section, we develop an intrinsic approach to the approximation of convex bodies based on Macbeath ellipsoids. Let us assume that the convex body $K$ is in canonical form.\footnote{This is merely a convenience that allows us to use absolute errors. Since Macbeath regions and ellipsoids are affine invariants, our constructions apply even if $K$ is not provided in canonical form.} Recall from Section~\ref{prelim.sec} that for $\eps > 0$, $K_{\eps}$ is a contraction of $K$ such that the boundaries of $K$ and $K_{\eps}$ are separated by a distance of $\eps$.

Our approximation will be based in part on covering $K$ with a collection of Macbeath ellipsoids. The efficiency of the data structure relies on properties of these ellipsoids that are reminiscent of Delone sets in metric spaces. In this section, we review this concept, which arises from packing and covering by metric balls~\cite{Sut09,Cla06}. The concept of covering all or part of a convex body through the use of ``ball-like'' convex shapes has been presented before in the form of $(2,\eps)$-coverings~\cite{NaV22} and MNets~\cite{AFM23}.

Given a metric $f$ over a domain $\XX$, a point $x \in \XX$, and real $r > 0$, define the ball $B_{f}(x, r) = \{ y \in \XX : f(x,y) \le r\}$. For $\eps, \eps_p, \eps_c > 0$, a set $X \subseteq \XX$ is an:
\begin{description}
\vspace{5pt}\setlength{\itemsep}{2pt}%
\item[$\eps$-packing:] If the balls of radius $\eps/2$ centered at every point of $X$ do not intersect.

\item[$\eps$-covering:] If every point of $\XX$ is within distance $\eps$ of some point of $X$.

\item[$(\eps_p, \eps_c)$-Delone Set:] If $X$ is an $\eps_p$-packing and an $\eps_c$-covering.
\end{description}
Delone sets (also known as \emph{nets}) have been used in the design of data structures for answering geometric proximity queries in metric spaces through the use of hierarchies of nets, such as navigating nets \cite{KrL04}, net trees \cite{HaM06}, and cover trees \cite{BKL06}. (While a Delone set is a set of points, we will often blur the distinction between the points and the associated pairs of packing and covering balls centered at those points.)

\subsection{Expansion-Containment.} \label{expansion-containment.sec}

An important property of Macbeath regions, which we call \emph{expansion-containment}, is that if two shrunken Macbeath regions overlap, then an appropriate expansion of one contains the other. The following is a generalization of results of Ewald, Rogers and Larman \cite{ELR70} and Br\"{o}nnimann, Chazelle, and Pach~\cite{BCP93}. Our generalization allows the shrinking factor to be adjusted, and also shows how to adjust the expansion factor of the first body when it suffices to cover a scaled version of the second body, e.g., the center point only. The proof follows by straightforward algebra and is reproduced in Appendix~\ref{deferred.sec} for completeness.

\begin{figure}[htbp]
\centering
  \includegraphics[scale=0.40]{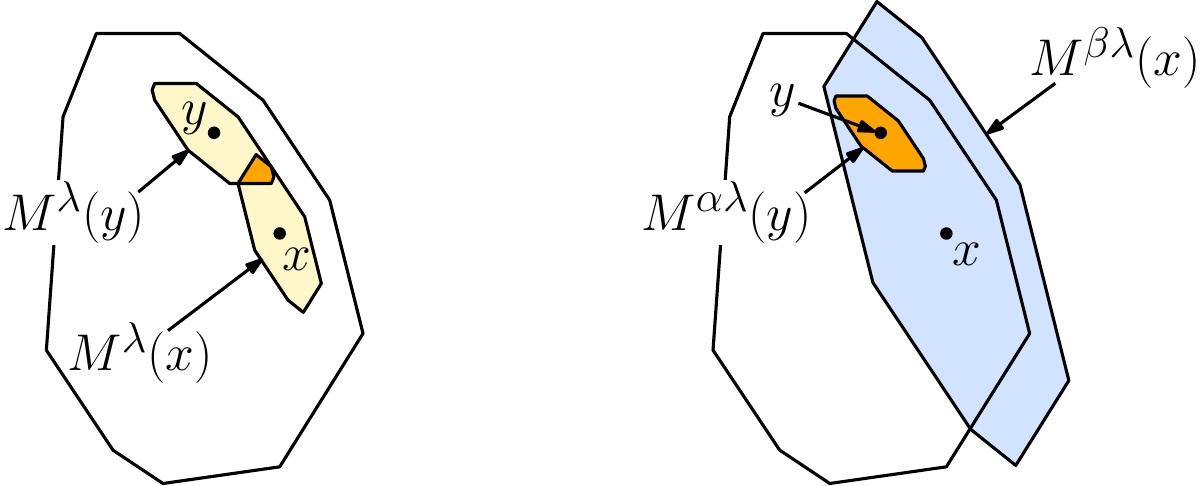}
  \caption{Expansion-containment per Lemma~\ref{exp-con.lem}.}
  \label{exp-con.fig}
\end{figure}

\begin{restatable}{lemma}{ExpansionContainment} \label{exp-con.lem}
\emph{(Expansion-Containment for Macbeath Regions)}
Let $K \subset \RE^d$ be a convex body and let $0 < \lambda < 1$. If $x,y \in K$ such that $M^{\lambda}(x) \cap M^{\lambda}(y) \neq \emptyset$, then for any $\alpha \ge 0$ and $\beta = \frac{2 + \alpha(1+\lambda)}{1 - \lambda}$, $M^{\alpha\lambda}(y) \subseteq M^{\beta\lambda}(x)$ (see Figure~\ref{exp-con.fig}).
\end{restatable}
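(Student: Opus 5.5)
We work directly from the definition $M^{\lambda}(x) = x + \lambda\bigl((K-x)\cap(x-K)\bigr)$. Write $C_x = (K-x)\cap(x-K)$. This is a convex set, symmetric about the origin, and so $M^{\mu}(x) = x + \mu C_x$ for every $\mu \ge 0$. The goal is then to show that for every $u \in C_y$ the point $y + \alpha\lambda u$ lies in $x + \beta\lambda C_x$. Since $C_x$ is symmetric, it suffices to show that $y - x + \alpha\lambda u \in \beta\lambda (K - x)$: replacing $u$ by $-u$ gives the other condition, namely membership in $\beta\lambda(x-K)$.

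The first step extracts information from the intersection hypothesis. Pick $z \in M^{\lambda}(x)\cap M^{\lambda}(y)$, so $z = x + \lambda a = y + \lambda b$ with $a \in C_x$ and $b \in C_y$. Then $y - x = \lambda(a - b)$. The key move is to express $y$ in terms of $x$ using points of $K$. From $z = y + \lambda b$ with $-b \in C_y \subseteq K - y$, the point $y - b$ lies in $K$, and $z = y + \lambda b$ gives $y = \frac{z - \lambda (y-b)}{1-\lambda}\cdot$, after rearrangement. More concretely, I expect to write $y - x$ as a combination
\[
y - x ~=~ \frac{\lambda}{1-\lambda}\bigl( (\text{point of } K - x) \text{ terms} \bigr),
\]
with coefficients summing to $\frac{2\lambda}{1-\lambda}$. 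The natural choice is to solve $y - x = \lambda a - \lambda b$ together with $b \in y - K$. This gives $\lambda b = \lambda(y - k)$ for some $k \in K$, hence $(1-\lambda)(y-x) = \lambda a - \lambda(x - k) \cdot$ (suitably). Here $a \in K - x$ and $(k - x)\in K - x$. Each term of the resulting expression is then $\lambda/(1-\lambda)$ times an element of $K-x$, and there are two such terms. This accounts for the constant $2$ in $\beta$.

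The second step handles the term $\alpha\lambda u$ with $u \in C_y \subseteq K - y$. Write $u = k' - y$ with $k' \in K$, so that $u = (k'-x) - (y - x)$. Substituting the expression for $y - x$ from the first step, $\alpha\lambda u$ becomes $\alpha\lambda$ times an element of $K - x$, plus $\alpha\lambda$ times a multiple of $(y-x)$. That multiple of $y-x$ contributes another $\frac{\alpha\lambda\cdot\lambda}{1-\lambda}$-weighted collection of elements of $K-x$. Collecting everything, we get $y - x + \alpha\lambda u$ as a nonnegative combination of elements of $K - x$. The total weight is
\[
\frac{2\lambda}{1-\lambda} + \alpha\lambda + \frac{\alpha\lambda^2}{1-\lambda}\cdot(\ldots) ~=~ \lambda\,\frac{2 + \alpha(1+\lambda)}{1-\lambda} ~=~ \beta\lambda,
\]
since $\alpha\lambda(1-\lambda) + 2\alpha\lambda^2$ must combine to give $\alpha\lambda(1+\lambda)$. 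Because $K - x$ is convex and contains $0$, any nonnegative combination of its elements with total weight $W$ lies in $W(K-x)$. This closes the argument.

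The main obstacle is the sign bookkeeping. Subtracted elements of $K-x$ are not in $K - x$, so every term must be rewritten with a positive coefficient, using the symmetry $a, -a \in C_x$ where needed. I would first try a cleaner route that avoids this. Show $y \in x + \frac{2\lambda}{1-\lambda}C_x$, which is the $\alpha=0$ case and is the classical Ewald--Larman--Rogers step. Then show $C_y \subseteq \frac{1+\lambda}{1-\lambda}C_x$ by noting that $K - y = (K - x) - (y-x)$ and bounding via the gauge of $C_x$. Together these give $\beta = \frac{2}{1-\lambda} + \alpha\frac{1+\lambda}{1-\lambda}$, which matches the stated formula exactly. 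I would therefore organize the proof as these two containments.
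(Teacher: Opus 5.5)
Your final plan is correct and gives exactly $\beta\lambda$. It has two steps: (a) show $y-x\in\frac{2\lambda}{1-\lambda}C_x$, and (b) show $C_y\subseteq\frac{1+\lambda}{1-\lambda}C_x$. You then add them using $sC_x+tC_x=(s+t)C_x$. Two points in your write-up need repair.

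\textbf{The opening reduction is false.} Replacing $u$ by $-u$ turns $y-x+\alpha\lambda u$ into $y-x-\alpha\lambda u$, not into $x-y-\alpha\lambda u$. So the $(K-x)$-condition alone does not yield the $(x-K)$-condition. The two-containment plan avoids this only if (a) is proved on both sides, and your step-one algebra mixes the two cases (the factor $1-\lambda$ there should be $1+\lambda$ for your choice $b=y-k$). With $z=x+\lambda a=y+\lambda b$, the two cases are:
\begin{itemize}
\item Writing $b=k_1-y$ gives $(1-\lambda)(y-x)=\lambda a+\lambda(x-k_1)$, so $y-x\in\frac{2\lambda}{1-\lambda}(x-K)$. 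This is the paper's identity $y=\frac{1+\lambda}{1-\lambda}x-\frac{\lambda}{1-\lambda}(k_1+k_2)$.
\item Writing $b=y-k_1'$ (allowed since $-b\in K-y$) gives $(1+\lambda)(y-x)=\lambda a+\lambda(k_1'-x)$, so $y-x\in\frac{2\lambda}{1+\lambda}(K-x)\subseteq\frac{2\lambda}{1-\lambda}(K-x)$.
\end{itemize}

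\textbf{In (b), use the gauge of $K-x$, not of $C_x$.} The subadditivity you need is that of the gauge of $K-x$. The $C_x$-gauge of $k-x$ is not bounded by $1$ and can be arbitrarily large as $x$ approaches $\partial K$. By the first identity, $K-y=(K-x)+(x-y)\subseteq\frac{1+\lambda}{1-\lambda}(K-x)$. Applying this to both $u$ and $-u$ for $u\in C_y$ gives (b).

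\textbf{Comparison with the paper.} The paper does one direct substitution. It inserts the identity for $y$ into $v=(1+\alpha\lambda)y-\alpha\lambda k_3$ and reads off $v=x+\beta\lambda(x-k_4)$, with $k_4$ an explicit convex combination of $k_1,k_2,k_3$. It proves only the $(x-K)$ half and omits the other half as symmetric. Your decomposition is the same algebra regrouped, since $(1+\alpha\lambda)(y-x)+\alpha\lambda(x-k_3)=(y-x)+\alpha\lambda\bigl[(y-x)+(x-k_3)\bigr]$. What your route buys:
\begin{itemize}
\item It explains the two summands of $\beta\lambda$. The term $\frac{2\lambda}{1-\lambda}$ locates the center $y$, and the term $\alpha\lambda\frac{1+\lambda}{1-\lambda}$ compares the shapes $C_y$ and $C_x$.
\item It covers the half the paper omits, at the cost of one extra one-sided computation in (a).
\item It isolates the reusable fact $C_y\subseteq\frac{1+\lambda}{1-\lambda}C_x$.
\end{itemize}
The paper's route is shorter and produces an explicit witness $k_4$, but it leaves the second half to the reader.
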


By John's theorem, there is an easy extension to Macbeath ellipsoids.

\begin{corollary}[Expansion-Containment for Macbeath Ellipsoids]\label{exp-con-ellipse.lem}
Let $K \subset \RE^d$ be a convex body and let $0 < \lambda < 1$. If $x,y \in K$ such that $E^{\lambda}(x) \cap E^{\lambda}(y) \neq \emptyset$, then for any $\alpha \ge 0$ and $\beta = \frac{2 + \alpha(1+\lambda)}{1 - \lambda}\sqrt{d}$, $E^{\alpha\lambda}(y) \subseteq E^{\beta\lambda}(x)$.
\end{corollary}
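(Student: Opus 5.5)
The plan is to reduce Corollary~\ref{exp-con-ellipse.lem} to Lemma~\ref{exp-con.lem} using the John sandwich $E^{\mu}(z) \subseteq M^{\mu}(z) \subseteq E^{\mu\sqrt{d}}(z)$. This sandwich holds for every center $z \in \interior(K)$ and every $\mu \ge 0$. It is scale-invariant: both $E^{\mu}(z)$ and $M^{\mu}(z)$ are $\mu$-factor scalings about the common center $z$ of $E(z)$ and $M(z)$, and John's theorem for the centrally symmetric body $M(z)$ gives $E(z) \subseteq M(z) \subseteq \sqrt{d}\,E(z)$.

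First, transfer the hypothesis. Since $E^{\lambda}(x) \subseteq M^{\lambda}(x)$ and $E^{\lambda}(y) \subseteq M^{\lambda}(y)$, a common point of the two ellipsoids is also a common point of the two Macbeath regions. Hence $M^{\lambda}(x) \cap M^{\lambda}(y) \ne \emptyset$, with the same $\lambda \in (0,1)$, so the hypothesis of Lemma~\ref{exp-con.lem} holds.

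Next, apply the lemma and sandwich on both sides. Let $\beta_0 = \frac{2+\alpha(1+\lambda)}{1-\lambda}$. Lemma~\ref{exp-con.lem} gives $M^{\alpha\lambda}(y) \subseteq M^{\beta_0\lambda}(x)$. Then
\[
E^{\alpha\lambda}(y) ~\subseteq~ M^{\alpha\lambda}(y) ~\subseteq~ M^{\beta_0\lambda}(x) ~\subseteq~ E^{\beta_0\lambda\sqrt{d}}(x) ~=~ E^{\beta\lambda}(x),
\]
where $\beta = \beta_0\sqrt{d}$, exactly as in the statement.

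The only point needing care is boundary degeneracy. If $x$ or $y$ lies on $\bd K$, its Macbeath region and ellipsoid collapse to the single point, and the sandwich still holds trivially. Otherwise the argument applies as written. The step I would check most carefully is that the John factor $\sqrt{d}$ commutes with the scaling parameter $\lambda$; this follows because all the scalings share the same center.
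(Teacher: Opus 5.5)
Your proposal is correct and is exactly the argument the paper compresses into ``By John's theorem'': lift the intersection hypothesis to the Macbeath regions via $E^{\lambda} \subseteq M^{\lambda}$, apply Lemma~\ref{exp-con.lem}, and finish with the scaled John sandwich $E^{\mu}(z) \subseteq M^{\mu}(z) \subseteq E^{\mu\sqrt{d}}(z)$. One small correction to your boundary aside: for $x \in \bd K$ the set $M(x) = K \cap (2x - K)$ lies in a supporting hyperplane but need not be a single point (the midpoint of an edge of a square gives the whole edge), so ``the sandwich still holds trivially'' is false as stated; the case is moot, though, because $E(x)$ is only well defined when $M(x)$ is full-dimensional, i.e., $x \in \interior K$, and that is the only case the paper uses.
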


\subsection{Realizing Delone Sets.} \label{delone.sec}

For the sake of building a Delone set, we employ two scaling factors, one for packing and one for covering. Define constants $\lambda_p = \inv{8\sqrt{d}}$ and $\lambda_c = \inv{2}$. For any $x \in K$, let $E''(x) = E^{\lambda_p}(x)$ and $E'(x) = E^{\lambda_c}(x)$ (see Figure~\ref{covering-ellipsoids.fig}(a)). The next lemma from~\cite{AbM18} shows that we can create a Delone set from any maximal packing of the $E''$ ellipsoids. (See Appendix~\ref{deferred.sec} for the proof).

\begin{figure}[htbp]
\centering
  \includegraphics[scale=0.38]{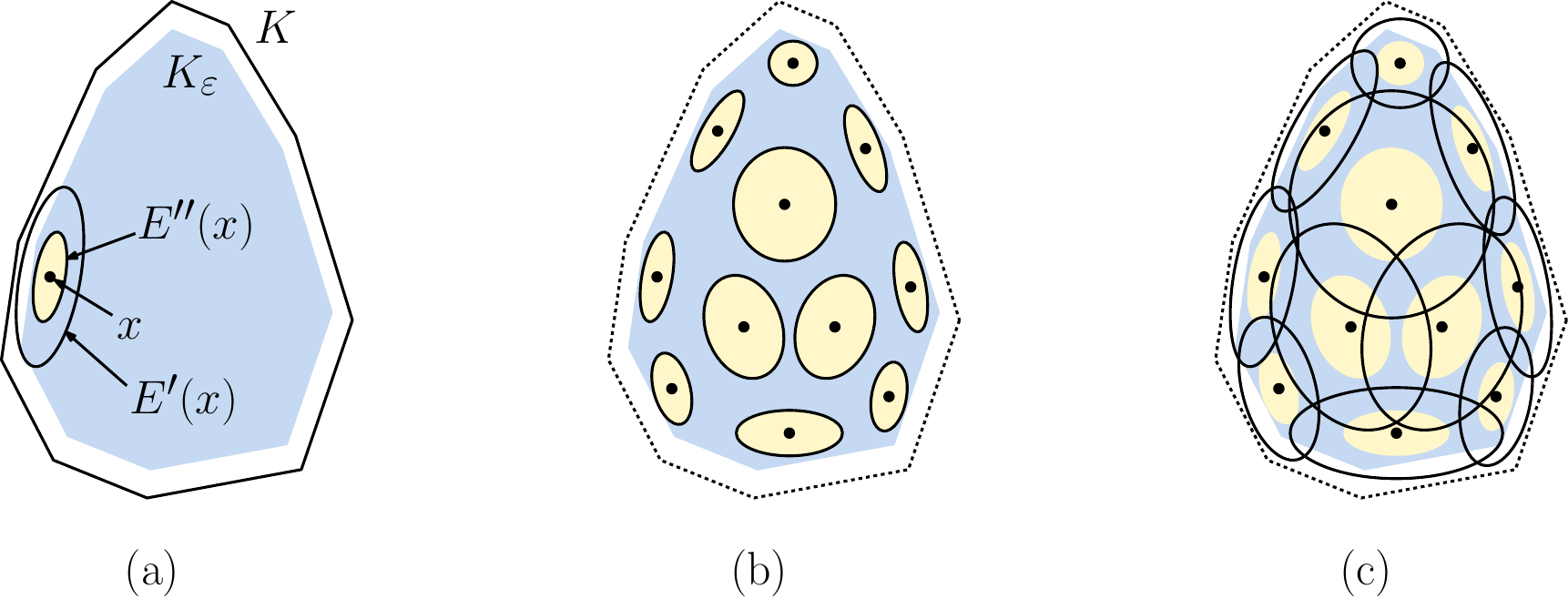}
  \caption{A Delone set for a convex body based on Macbeath ellipsoids. (Not drawn to scale.)}
  \label{covering-ellipsoids.fig}
\end{figure}

\begin{restatable}{lemma}{PackingCovering} \label{packing-covering.lem}
\emph{(Macbeath-Based Delone Set)}
Given a convex body $K \subset \RE^d$ and $\eps > 0$, let $X_{\eps}(K) = X_{\eps}$ denote any maximal set of points in $K_{\eps}$ such that the ellipsoids $E''(x)$ (defined above) are pairwise disjoint (see Figure~\ref{covering-ellipsoids.fig}(b)). Then $X_{\eps}$ is a Delone set over $K_{\eps}$ in the sense that the $E''$ ellipsoids form a packing and the $E'$ ellipsoids form a covering (see Figure~\ref{covering-ellipsoids.fig}(c)) of $K_{\eps}$. Furthermore, $\bigcup_{x \in X_{\eps}} E'(x) \subseteq K$.
\end{restatable}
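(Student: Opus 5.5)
The statement has three parts: packing, covering of $K_\eps$, and containment of the covering ellipsoids in $K$. Packing holds by construction, since the points of $X_\eps$ were chosen so that the $E''$ ellipsoids are pairwise disjoint. So the work lies in the covering property and the containment $\bigcup_{x} E'(x) \subseteq K$.

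\emph{Covering.} Fix any $y \in K_\eps$; we want some $x \in X_\eps$ with $y \in E'(x)$.

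By maximality of $X_\eps$, the ellipsoid $E''(y)$ cannot be added disjointly. If $y \in X_\eps$ we are done, since $y \in E'(y)$. Otherwise there is some $x \in X_\eps$ with $E^{\lambda_p}(x) \cap E^{\lambda_p}(y) \neq \emptyset$.

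We apply Corollary~\ref{exp-con-ellipse.lem} with $\lambda = \lambda_p$ and $\alpha = 0$, so that $E^{0}(y) = \{y\}$. This gives $y \in E^{\beta\lambda_p}(x)$ with
\[
\beta = \frac{2\sqrt{d}}{1-\lambda_p}.
\]
It remains to check the constants:
\[
\beta\lambda_p = \frac{2\sqrt{d}}{1-\lambda_p}\cdot\frac{1}{8\sqrt{d}} = \frac{1}{4(1-\lambda_p)}.
\]
Since $\lambda_p \le 1/8$, this is at most $\frac{1}{4}\cdot\frac{8}{7} = \frac{2}{7} < \frac{1}{2} = \lambda_c$. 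Ellipsoids scaled about a common center are nested, so $y \in E^{\lambda_c}(x) = E'(x)$.

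The only delicate point is that the corollary is stated for $x, y \in K$; here both lie in $K_\eps \subseteq K$, so it applies directly. Alternatively, one can use Lemma~\ref{exp-con.lem} on Macbeath regions together with John's sandwich $E^\lambda \subseteq M^\lambda \subseteq E^{\lambda\sqrt d}$; the factor $\sqrt d$ in $\lambda_p$ is exactly what absorbs that loss.

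\emph{Containment.} For each $x \in X_\eps$, the ellipsoid $E(x)$ is inscribed in $M^1(x) \subseteq K$. Scaling about $x$ by $\lambda_c = 1/2 < 1$ keeps it inside $E(x)$, so $E'(x) \subseteq E(x) \subseteq K$. No erosion argument is needed for this part.

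The main obstacle, modest as it is, is the constant bookkeeping in the covering step: confirming that $\beta\lambda_p \le \lambda_c$ under the chosen $\lambda_p = 1/(8\sqrt d)$, including the $\sqrt d$ factor from John's theorem inside $\beta$.
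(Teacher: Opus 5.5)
Your proof is correct and follows the paper's argument: maximality gives some $x \in X_\eps$ with $E''(x)\cap E''(y)\neq\emptyset$, then Corollary~\ref{exp-con-ellipse.lem} plus a constant check gives $y \in E'(x)$, and $\lambda_c<1$ gives $E'(x)\subseteq K$. The only difference is your choice $\alpha=0$, which covers just the center $y$ with slack $\beta\lambda_p\le 2/7$; the paper instead takes $\alpha=1$ and uses $\lambda_p<1/5$ to get $\beta\lambda_p<1/2$, which yields the stronger inclusion $E''(y)\subseteq E'(x)$ that the proof of Lemma~\ref{apm-finger-search.lem} later cites, so your version proves this lemma but would not support that later step.
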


Henceforth, we use the term \emph{Delone set} for a given convex body $K$ and $\eps > 0$ in the above sense, namely as any maximal set of points in $K_{\eps}$ such that the ellipsoids $E''(x)$ are pairwise disjoint. 

\subsection{Size Bounds.} \label{delone-size.sec}

Let us now bound the size of the Delone set $X_{\eps}$ for any convex body $K$. We first observe the following easy upper bound. Because the boundaries of $K$ and $K_{\eps}$ are at least $\eps$ apart, for each $x \in K_{\eps}$ the Euclidean ball $B(x,\eps)$ is contained within $K$, and hence it lies within the Macbeath region $M(x)$. It follows that the volume of $E''(x)$ is $\Omega(\vol(B(x,\eps))) = \Theta(\eps^d)$. As the ellipsoids $E''$ are disjoint and $K$'s volume is a constant, a packing argument shows that $\sizeof{X_\eps}$ is $O(1/\eps^d)$.

The upper bound from simple packing is clearly inferior to the bound suggested by Dudley's result. Prior works utilizing Macbeath regions~\cite{AFM17a,AbM18} established the desired bound, obtaining variants of Theorem~\ref{size-bound.thm} below, using the width-based economical cap cover \cite{AFM17c,AAFM20}, which is a convenient adaptation of B{\'a}r{\'a}ny's volume-based economical cap cover~\cite{Bar00}. We will present a more direct proof based solely on the growth rate of the volume measure in the Hilbert geometry.

\begin{theorem} \label{size-bound.thm}
For a convex body $K \subset \RE^d$ and $\eps > 0$, any Delone set $X_{\eps}$ over $K_{\eps}$ has $\sizeof{X_\eps} = O(1/\eps^{(d-1)/2})$.
\end{theorem}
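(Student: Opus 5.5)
The plan is a packing argument carried out entirely inside the Hilbert geometry: Delone points are well separated in $d_K$, they all lie in a Hilbert ball about $O$ of radius about $\frac12\ln\frac1\eps$, and the volume of such a ball grows no faster than $e^{(d-1)R}$.

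\emph{Step 1: disjoint Hilbert balls.} Fix the Delone set $X_\eps$. By Corollary~\ref{macbeath-to-hilbert.cor} with $\lambda=\lambda_p$, each ellipsoid $E''(x)$ contains the Hilbert ball $B_K(x,r_0)$, where $r_0=\frac12\ln(1+\lambda_p/\sqrt d)$ is a constant depending only on $d$. Since the $E''(x)$ are pairwise disjoint, so are the balls $B_K(x,r_0)$ for $x\in X_\eps$.

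\emph{Step 2: a common enclosing ball.} Every $x\in X_\eps$ lies in $K_\eps$. By Lemma~\ref{hilbert-radius.lem}, $d_K(O,x)\le R_\eps:=\frac12\ln\frac1{\eps\gamma}=\frac12\ln\frac1\eps+O(1)$. By the triangle inequality, all the balls $B_K(x,r_0)$ lie in $B_K(O,R_\eps+r_0)$.

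\emph{Step 3: count by volume.} Measure sets with a Hilbert volume $\mu_K$, for instance the Busemann volume. Its density at $x$ is $\omega_d/\vol(B^{\mathrm{tan}}_x)$, where $\omega_d$ is the volume of the Euclidean unit ball. Here $B^{\mathrm{tan}}_x$ is the unit ball of the Finsler norm at $x$, and it is comparable within constant factors to $\frac{1}{\vol(M(x))}$: the infinitesimal Hilbert ball at $x$ is, up to a factor of $2$, the symmetrization $(K-x)\cap(x-K)$. Combining this with Lemma~\ref{macbeath-to-hilbert.lem}, $B_K(x,r_0)$ contains a constant scaling of $M(x)$. On that scaling the Macbeath regions of nearby points have comparable volume, by Lemma~\ref{exp-con.lem}. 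Hence $\mu_K(B_K(x,r_0))\ge c_d>0$ uniformly in $x$. This is the familiar fact that constant-radius Hilbert balls have volume bounded below. It follows that
\[
\sizeof{X_\eps}\;\le\;\frac{\mu_K\bigl(B_K(O,R_\eps+r_0)\bigr)}{c_d}.
\]

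\emph{Step 4: volume growth.} It remains to show $\mu_K(B_K(O,R))\le C_d\,e^{(d-1)R}$ with $C_d$ independent of $K$ in canonical form. This is the statement that the volume entropy of any Hilbert geometry is at most $d-1$, in the uniform form due to Tholozan and to Berck--Bernig--Vernicos. Substituting $R=\frac12\ln\frac1\eps+O(1)$ gives $e^{(d-1)R}=O(\eps^{-(d-1)/2})$, which is the claimed bound.

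\emph{Main obstacle.} Step~4 is the crux; the rest is bookkeeping. If I prove it directly rather than cite it, I would first try polar coordinates about $O$. Write $x=(1-t)u$ with $u\in\bd K$; then $B_K(O,R)$ corresponds roughly to $t\gtrsim e^{-2R}$. I would bound the Busemann density by $1/\vol(M(x))$. Since $\vol(M(x))$ is not controlled pointwise (near a vertex of a polytope it behaves like $t^d$), the needed estimate is an averaged one: the set of boundary directions $u$ for which $\vol(M((1-t)u))$ is small has small measure. This is exactly the content that the economical cap cover supplies, which is why I would rather quote the volume-entropy theorem. If the cited result is available only as a $\limsup$, I would check that the canonical-form normalization together with compactness of the space of canonical bodies under the Hausdorff metric yields a uniform constant $C_d$.
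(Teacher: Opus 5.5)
Your proposal is correct and follows the paper's proof. Both use disjoint Hilbert balls of constant radius inside the packing ellipsoids $E''(x)$, all contained in a Hilbert ball about $O$ of radius $\frac12\ln\frac1\eps+O(1)$ (you use the triangle inequality where the paper uses $K_{\eps/2}$ and Lemma~\ref{hilbert-radius.lem}), and count them via the $O(e^{(d-1)r})$ volume-growth bound of Benoist--Hulin and Tholozan. Your Step~3 usefully makes explicit the uniform lower bound on the Hilbert volume of each $B_K(x,r_0)$, which the paper uses tacitly. Like the paper, though, you genuinely need the uniform bound $\mu_K(B_K(O,R))\le C_d e^{(d-1)R}$: an entropy ($\limsup$) statement alone only yields $e^{(d-1+\delta)R}$, and your fallback of compactness over canonical bodies would not repair that.
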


\begin{proof}
Our proof is based on the connection between Macbeath regions and Hilbert balls (from Lemma~\ref{macbeath-to-hilbert.cor}) combined with the upper bound on the volume entropy of the Hilbert metric. It has long been conjectured that the growth rate of the volume of metric balls in the Hilbert geometry, as the radius tends to infinity, never exceeds that of the hyperbolic geometry. This was recently proved in full generality by Benoist and Hulin~\cite{BeH13} and Tholozan~\cite{Tho17}, providing the following bound on the volume of $B_K(O, r)$.
\[
    \vol\big(B_K(O, r)\big) 
        ~ = ~ O\big(e^{(d-1)r}\big).
\]

By Lemma~\ref{hilbert-radius.lem}, $K_{\eps/2}$ is contained within a Hilbert ball of radius $r_{\eps} = \inv{2} \ln \frac{2}{\eps\gamma}$. Recall that the ellipsoids $\{E''(x)\}_{x \in X_\eps}$ are pairwise disjoint. By Corollary~\ref{macbeath-to-hilbert.cor}, each such ellipsoid $E''(x)$ contains a Hilbert ball of radius $\inv{2}\ln \left(1 + \frac{\lambda_p}{d}\right)$, and since those balls are pairwise disjoint, their number is upper bounded by the volume of the Hilbert ball $B_K(O,r_{\eps})$, which by the volume bound cited above is at most
\[
    \exp((d-1)r_{\eps})
        ~ = ~ \exp\left( \frac{d-1}{2} \ln \frac{2}{\eps\gamma} \right)
        ~ = ~ \left( \frac{2}{\eps\gamma} \right)^{\frac{d-1}{2}}
        ~ = ~ O\left( \left( \inv{\eps} \right)^{\frac{d-1}{2}}\right),
\]
with the hidden constant depending on $d$.
\end{proof}

\section{Approximate Polytope Membership (APM)} \label{apm.sec}

In this section we present our Delone-based approach to answering $\eps$-APM queries for a convex body $K$ in $\RE^d$ in canonical form. Recall that such a query must answer ``no'' if $x \notin K$, ``yes'' if $x \in K_{\eps}$, and may give either response otherwise. Recall the Delone-set $X_{\eps}$ of Section~\ref{delone.sec}, that is, any maximal subset of $K_{\eps}$ such that the ellipsoids $E''$ form a packing. Our APM data structure consists simply of the intersection graph of the covering ellipsoids $E'$ of $X_{\eps}$ (see Figure~\ref{apm-graph.fig}(a), (b)). More formally, define $G_{\eps}(K)$ to be an undirected graph whose vertex set is $X_{\eps}$, and whose edges are pairs $x,y \in X_{\eps}$ such that $E'(x) \cap E'(y) \neq \emptyset$. This graph will have a distinguished vertex $x_0 \in X_{\eps}$, called the \emph{root}, such that $E'(x_0)$ contains the origin. (If there are multiple ellipsoids covering the origin, take any one.)

\begin{figure}[htbp]
  \centerline{\includegraphics[scale=0.40]{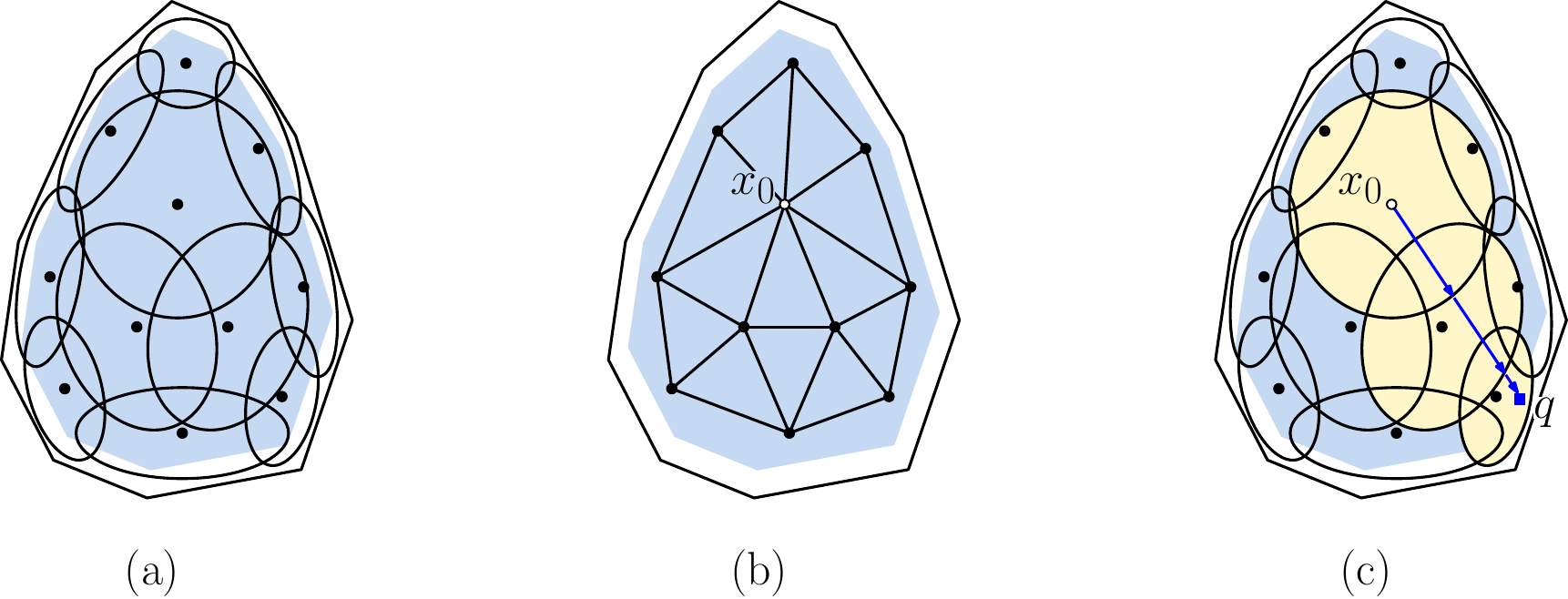}}
  \caption{Data structure for answering APM queries.} \label{apm-graph.fig}
\end{figure}

To answer an APM query for a point $q$, we walk along the ray from the origin to $q$, passing through the ellipsoids that intersect this ray (see Figure~\ref{apm-graph.fig}(c)). We start with the root by setting $x \gets x_0$. If $q \in E'(x)$ then we know that $q \in K$, and it is safe to answer ``yes.'' Otherwise, we inspect the neighbors of $x$ in the intersection graph. If we cannot progress further along the ray by traveling through one of the ellipsoids associated with these neighbors, then we stop and report ``no.'' This is justified because the extreme point $p$ where the ray exits $E'(x)$ lies on the boundary of $\bigcup_{z \in X_{\eps}} E'(z)$. Since these ellipsoids cover $K_{\eps}$, every point on this boundary is external to $K_{\eps}$, and it is safe to answer ``no.'' Otherwise, we determine the neighbor $y$ of $x$ in $G_{\eps}(K)$ for which the ray can travel the farthest through $E'(y)$. We set $x \gets y$ and continue the search. (For example, in Figure~\ref{apm-graph.fig}(c) the search visits the three highlighted ellipsoids and reports that $q$ lies approximately within $K$.)

The storage needed by this data structure is dominated by the size of the graph $G_{\eps}(K)$. The query time is the product of the maximum number of ellipsoids visited in the search and the maximum degree of any vertex in $G_{\eps}(K)$. The next few lemmas bound each of these quantities.

First, we show that the graph has constant degree. This is a standard property of Delone sets in spaces of constant dimension. It follows by combining the expansion-containment property with a simple packing argument. The proof appears in Appendix~\ref{deferred.sec} for completeness.

\begin{restatable}{lemma}{APMOverlapBound} \label{apm-overlap-bound.lem}
The maximum degree of the graph $G_{\eps}(K)$ is $O(1)$.
\end{restatable}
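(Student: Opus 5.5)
Fix a vertex $x \in X_{\eps}$ and let $y$ be any neighbor, so $E'(x) \cap E'(y) \neq \emptyset$. The plan is to show that every neighbor's packing ellipsoid $E''(y)$ lies inside a single constant-factor expansion of $E(x)$, and that each $E''(y)$ has volume at least a constant fraction of that expansion. Since the $E''$ ellipsoids are pairwise disjoint by Lemma~\ref{packing-covering.lem}, a volume-packing argument then bounds the number of neighbors.

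\textbf{Step 1 (containment).} Apply Corollary~\ref{exp-con-ellipse.lem} with $\lambda = \lambda_c = 1/2$. Since $E^{1/2}(x) \cap E^{1/2}(y) \neq \emptyset$, every $\alpha \ge 0$ gives $E^{\alpha/2}(y) \subseteq E^{\beta/2}(x)$ with $\beta = (2 + \tfrac{3}{2}\alpha)\cdot 2\sqrt{d}$. Choosing $\alpha = 2\lambda_p = 1/(4\sqrt{d})$ makes $E^{\alpha/2}(y) = E''(y)$. Hence $E''(y) \subseteq E^{c}(x)$ for a constant $c = c(d)$, the same for all neighbors $y$.

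\textbf{Step 2 (comparable volumes).} I need $\vol(E(y)) = \Omega(\vol(E(x)))$. For this, apply the same corollary with the roles of $x$ and $y$ swapped, again with $\lambda = 1/2$, but now taking $\alpha$ large. A direct containment $E^{\alpha/2}(x) \subseteq E^{\beta/2}(y)$ is only informative if $\beta$ grows linearly with $\alpha$ without the expansion becoming vacuous. Since ellipsoids scale exactly, I will instead use $\alpha = 2$: this gives $E(x) \subseteq E^{\beta/2}(y)$ with $\beta/2 = (2+3)\sqrt{d} = 5\sqrt{d}$. Therefore $\vol(E(x)) \le (5\sqrt{d})^d \vol(E(y))$. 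Note that Lemma~\ref{exp-con.lem} is stated for any $\alpha \ge 0$ even when the scaled regions leave $K$; this is legitimate because the regions are defined by scaling, and the corollary follows through John's theorem. Consequently
\[
\vol(E''(y)) \;=\; \lambda_p^d \vol(E(y)) \;\ge\; \lambda_p^d (5\sqrt{d})^{-d} \vol(E(x)).
\]

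\textbf{Step 3 (packing).} The disjoint sets $E''(y)$, over the neighbors $y$ of $x$, all lie in $E^{c}(x)$, which has volume $c^d \vol(E(x))$. So the degree of $x$ is at most $c^d (5\sqrt{d})^d / \lambda_p^d = O(1)$, a constant depending only on $d$. The neighbor $y = x$ itself, if counted, changes nothing.

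The main obstacle is Step 2, namely ensuring that the symmetric application of expansion-containment is valid. The intersection hypothesis is symmetric, so swapping $x$ and $y$ is fine. The only subtlety is confirming that the corollary allows $\alpha\lambda \ge 1$, which holds as stated. Should that be doubted, a fallback is to derive the volume comparison from Step 1 itself, using the Hilbert-ball comparisons of Corollary~\ref{macbeath-to-hilbert.cor}.
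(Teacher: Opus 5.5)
Your proof is correct and follows the paper's argument. You place each neighbor's $E''(y)$ in a fixed expansion of $E(x)$ via Corollary~\ref{exp-con-ellipse.lem} with $\lambda = \lambda_c$, swap the roles of $x$ and $y$ to get comparable volumes, and finish with a packing count. The only difference is that the paper reuses the small $\alpha = \lambda_p/\lambda_c$ in the swapped application, obtaining $E''(x) \subset E^{3\sqrt{d}}(y)$ and avoiding any concern about $\alpha\lambda \ge 1$. Your choice $\alpha = 2$ is equally valid, since the corollary holds for every $\alpha \ge 0$.
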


Next, we analyze the number of ellipsoids visited while answering any query. Our next lemma is a useful technical result, which states that the number of ellipsoids traversed by the search procedure is roughly proportional to the Hilbert distance traveled. (Recall that, unlike the Euclidean distance, Hilbert distances grow without limit as we approach the boundary of $K$.) Here is an intuitive justification. For any constant $\lambda < 1$, Corollary~\ref{macbeath-to-hilbert.cor} tells us that the shrunken Macbeath $E^{\lambda}(x)$ contains a Hilbert ball of constant radius. Thus, each time we walk the ray through a Macbeath ellipsoid in the search, we are moving at least a constant distance in terms of the Hilbert metric. Therefore, the number of iterations needed by the search algorithm is roughly proportional to the Hilbert distance traveled.

\begin{lemma}[Finger Search] \label{apm-finger-search.lem}
Given two points $p, q \in K_{\eps}$, let $E'(x)$ be any covering ellipsoid containing $p$ with $x \in X_{\eps}$. If our APM search procedure starts at $E'$, it runs in $O(1 + d_K(p, q))$ time.
\end{lemma}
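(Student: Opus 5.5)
The plan is to charge the running time to the number of ellipsoids visited, and to bound that number by a packing argument along the segment $pq$. Let $L = d_K(p,q)$. Here the search walks along the segment from $p$ toward $q$ rather than from the origin. By convexity $pq \subseteq K_{\eps}$, so every point of $pq$ is covered by some $E'(y)$ with $y \in X_{\eps}$.

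First I will check correctness and monotonicity of the walk. Let $z$ be the point where the segment exits the current ellipsoid $E'(x)$. Then $z$ lies in $K_{\eps}$, so it is covered by some $E'(y)$, and $z \in E'(x)\cap E'(y)$ makes $y$ a neighbor of $x$ in $G_{\eps}(K)$. Hence the walk never stops with ``no'', and it advances strictly along $pq$ until it reaches an ellipsoid containing $q$. Since the exit parameter strictly increases, no ellipsoid is visited twice. Each step inspects the neighbors of the current vertex, which costs $O(1)$ by Lemma~\ref{apm-overlap-bound.lem}. So it suffices to show that the number of distinct $y \in X_{\eps}$ with $E'(y)\cap pq \neq \emptyset$ is $O(1+L)$.

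To count these, I will place points $u_0=p, u_1,\ldots,u_m=q$ on $pq$ with consecutive Hilbert distance at most a small constant $c$. Since segments are geodesics, $m = O(1+L/c) = O(1+L)$. Any visited $E'(y)$ contains a point $w \in pq$. By Corollary~\ref{macbeath-to-hilbert.cor}, $d_K(y,w) \le \frac12\ln 3$, and $w$ is within distance $c$ of some $u_i$, so $d_K(y,u_i) \le R$ for an absolute constant $R$. It therefore remains to show that only $O(1)$ points $y\in X_{\eps}$ satisfy $d_K(y,u) \le R$ for any fixed $u$.

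This per-point bound is the main obstacle. I would argue as in Lemma~\ref{apm-overlap-bound.lem}. Connect $u$ to $y$ by a geodesic segment of Hilbert length at most $R$, and subdivide it into $O(1)$ pieces of length smaller than the inner radius $r_0=\frac12\ln(1+\lambda_p/\sqrt d)$ from Corollary~\ref{macbeath-to-hilbert.cor}. Consecutive subdivision points $v,v'$ then have intersecting ellipsoids $E''(v)\cap E''(v')\neq\emptyset$, since $v' \in B_K(v,r_0)\subseteq E''(v)$. Applying Corollary~\ref{exp-con-ellipse.lem} a constant number of times in both directions yields two consequences. First, $E''(y) \subseteq E^{\beta}(u)$ for a constant $\beta$ independent of $\eps$. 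Second, $\vol(E''(y)) \ge \vol(E^{\beta}(u))/C$ for a constant $C$, because a constant expansion of $E''(y)$ contains $E''(u)$ in turn. Since the ellipsoids $E''(y)$ are pairwise disjoint, at most $C$ such $y$ exist. Summing over the $m+1$ points $u_i$ gives $O(1+L)$ visited ellipsoids, each processed in $O(1)$ time. This yields the claimed $O(1+d_K(p,q))$ bound.
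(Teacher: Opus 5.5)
Your proof is correct, but it takes a different route from the paper.

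\textbf{The paper's route.} The paper uses the farthest-reaching rule directly. For each intermediate entry point $p_i$, the covering argument in the proof of Lemma~\ref{packing-covering.lem} gives some $x \in X_{\eps}$ with $B_K(p_i,c) \subseteq E''(p_i) \subset E'(x)$, where $c = \frac{1}{2}\ln(1+\lambda_p/\sqrt{d})$. This $x$ is a neighbor of the current vertex, so the greedy choice advances at least Hilbert distance $c$ past $p_i$. Additivity of $d_K$ along the segment then bounds the number of iterations by $2 + \ceil{d_K(p,q)/c}$.

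\textbf{Your route.} You never use the farthest-reaching rule. You use only that the walk moves monotonically along $pq$, so no ellipsoid is revisited. You then bound the total number of covering ellipsoids meeting $pq$ by a local-finiteness fact: only $O(1)$ Delone points lie within constant Hilbert distance of any point of $K_{\eps}$. You prove this by chaining Corollary~\ref{exp-con-ellipse.lem} along a short geodesic, in the style of Lemma~\ref{apm-overlap-bound.lem}.

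\textbf{Trade-offs.} The paper's argument is shorter and gives an explicit, modest constant $1/c$. Yours needs a longer chain of expansion-containment steps and yields a much larger constant, since the expansion factor compounds along the chain and then enters a volume ratio. In return it is more robust. The bound holds for any walk that moves continuously and makes strict progress along the segment, whichever progressing neighbor is chosen. It also gives the structural fact that any segment in $K_{\eps}$ meets only $O(1 + \text{Hilbert length})$ covering ellipsoids.

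One small point on your correctness paragraph. For the walk to advance strictly past an exit point $z$, you need $z$ to lie in the interior of some $E'(y)$, not merely in $E'(y)$. The proof of Lemma~\ref{packing-covering.lem} supplies this. For the running-time bound itself it is not needed, since the procedure only moves when it makes progress.
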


\begin{proof}
Because $G$ has bounded degree, each iteration of the search procedure runs in $O(1)$ time. It suffices to count the number of iterations. Starting the search at $p$, let $x_0, \ldots, x_k$ denote the vertices of $G_{\eps}(K)$ visited by the search (see Figure~\ref{finger-search.fig}(a)), and for $1 \leq i \leq k$, let $p_i$ denote the point where the search procedure exits $E'(x_{i-1})$ and enters $E'(x_i)$. Because straight line segments are geodesics in the Hilbert metric, 
\[
    d_K(p, q) 
        ~ = ~ d_K(p, p_1) + \left(\sum_{i=1}^{k-1} d_K(p_i, p_{i+1}) \right) + d_K(p_k,q).
\]
We assert that for $1 \leq i \leq k-1$, $d_K(p_i, p_{i+1})$ is bounded below by some constant $c$ (depending on the dimension but independent of $\eps$). It follows that the number of iterations is at most $2 + \ceil{d_K(p_1, p_k)/c} = O(1 + d_K(p, q))$.

\begin{figure}[htbp]
  \centerline{\includegraphics[scale=0.40]{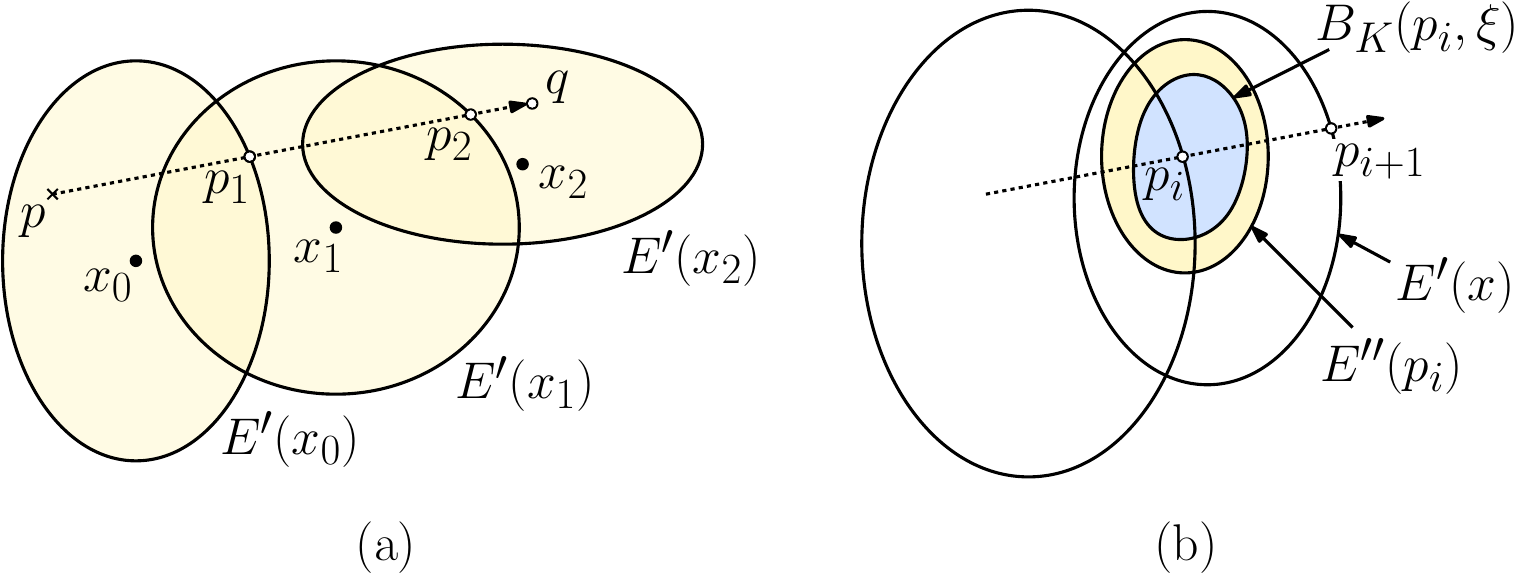}}
  \caption{Proof of Lemma~\ref{apm-finger-search.lem}.} \label{finger-search.fig}
\end{figure}

To establish the assertion, let $c = \frac{1}{2}\ln (1 + \lambda_p/\sqrt{d})$. Consider any point $p_i$ in this sequence. Since $p$ and $q$ lie in $K_{\eps}$, by convexity all the intermediate points on the ray do as well. In the proof of Lemma~\ref{packing-covering.lem}, we showed that for any $y \in K_{\eps}$, there exists $x \in X_{\eps}$ such that $E''(y) \subset E'(x)$. Therefore, there is a point $x \in X_{\eps}$ such that $E''(p_i) \subset E'(x)$ (see Figure~\ref{finger-search.fig}(b)). By Corollary~\ref{macbeath-to-hilbert.cor}, we have
\[
    B_K(p_i, c)  
        ~ \subseteq ~ E^{\lambda_p}(p_i)
        ~ =         ~ E''(p_i)
        ~ \subset   ~ E'(x).
\]
Clearly, $x$ is a neighbor of $x_{i-1}$ in $G_{\eps}(K)$. Therefore the search will travel along the ray at least until exiting $E'(x)$, implying that it will travel entirely through the above Hilbert ball, progressing a distance of at least $c$, which is clearly a constant. This establishes the assertion and completes the proof.
\end{proof}

Recalling from Lemma~\ref{hilbert-radius.lem} that the distance from the origin to any point of $K_{\eps}$ is $O(\log \inv{\eps})$. The following is a straightforward consequence.

\begin{lemma} \label{apm-query-time.lem}
Given any query point $q \in \RE^d$, the APM search procedure runs in $O(1 + \log \inv{\eps})$ time.
\end{lemma}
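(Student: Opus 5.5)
The plan is to reduce the statement to Lemma~\ref{apm-finger-search.lem}, taking the finger point to be the origin $O$ and the starting ellipsoid to be the root $E'(x_0)$, which contains $O$ by construction. The work is in the query points that do not lie in $K_{\eps}$: the finger search lemma assumes both endpoints are in $K_{\eps}$, whereas the query $q$ here is arbitrary.

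\emph{Case $q \in K_{\eps}$.} Both $O$ and $q$ lie in $K_{\eps}$, since $O \in K_{\eps}$ by canonical form. Lemma~\ref{apm-finger-search.lem} then gives running time $O(1 + d_K(O,q))$. By Lemma~\ref{hilbert-radius.lem}, $K_{\eps} \subseteq B_K(O, \inv{2}\ln\inv{\eps\gamma})$, so $d_K(O,q) \le \inv{2}\ln\inv{\eps\gamma} = O(\log\inv{\eps})$, since $\gamma$ is a constant. The bound follows.

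\emph{Case $q \notin K_{\eps}$.} Let $q^*$ be the point where the ray from $O$ through $q$ exits $K_{\eps}$; it exists because $K_{\eps}$ is compact and contains $O$ in its interior. While the search is at points of the segment $Oq^*$, its behavior is exactly that analyzed in Lemma~\ref{apm-finger-search.lem}.
\begin{itemize}
\item Apply the counting argument of that proof to the portion of the walk inside $K_{\eps}$. Every intermediate crossing point $p_i$ on $Oq^*$ lies in $K_{\eps}$, so the Hilbert ball $B_K(p_i,c)$ lies inside some $E'(x)$ with $x$ adjacent to the current vertex. Hence each step advances Hilbert distance at least $c$ along the segment, and the number of iterations before the walk passes $q^*$ is $O(1 + d_K(O,q^*)) = O(\log\inv{\eps})$.
\item After passing $q^*$, the walk moves through ellipsoids that each meet the ray beyond $q^*$. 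It terminates either by reaching an ellipsoid containing $q$ (answer ``yes'') or by being unable to progress (answer ``no'').
\end{itemize}

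The main obstacle is bounding the number of steps taken beyond $q^*$, where the finger search lemma gives no guarantee. I would argue as follows.
\begin{enumerate}
\item Every ellipsoid $E'(x)$ visited after $q^*$ has its center $x \in K_{\eps}$ and intersects the ray beyond $q^*$.
\item By the greedy rule, each newly chosen ellipsoid reaches farther along the ray than the previous one. So each step strictly advances the furthest covered point, and no ellipsoid is visited twice.
\item Every ellipsoid meeting the ray beyond $q^*$ lies within $O(1)$ hops of the ellipsoid covering $q^*$. Indeed, all such ellipsoids have centers in $K_{\eps}$ and meet the segment from $q^*$ to $\bd K$. Alternatively, since each $E'(x) \subseteq K$, the part of the ray inside $K$ beyond $q^*$ can be handled by a packing argument. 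Using Corollary~\ref{exp-con-ellipse.lem}, any two such ellipsoids whose $E'$ regions both touch a common small neighborhood of $q^*$ are contained in a constant expansion of one of them. Combined with the disjointness of the $E''$ ellipsoids and a volume comparison, as in the proof of Lemma~\ref{apm-overlap-bound.lem}, this shows only $O(1)$ of them can exist.
\end{enumerate}
If this local packing step proves delicate, I would fall back on the fact that the ray beyond $q^*$ inside $K$ has Euclidean length $O(\eps)$. Each visited ellipsoid contains $E''(x) \supseteq$ a ball of radius $\Omega(\eps)$, since $x \in K_{\eps}$. Disjointness of the $E''$ ellipsoids then limits their number near that short segment to $O(1)$.

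Combining the two phases gives $O(\log\inv{\eps}) + O(1)$ iterations, each costing $O(1)$ time by Lemma~\ref{apm-overlap-bound.lem}. This yields the claimed bound of $O(1 + \log\inv{\eps})$.
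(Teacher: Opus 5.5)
Your case $q \in K_{\eps}$ and your treatment of the walk up to $q^*$ match the paper's proof. The paper then handles the rest in one sentence, asserting that the search terminates within one further iteration after $q'$. You are right that this tail needs an argument, but none of the three you offer actually bounds it.

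\emph{Step 3, first argument.} You claim every ellipsoid meeting the ray beyond $q^*$ lies within $O(1)$ hops of the one covering $q^*$. Your only justification is that these ellipsoids have centers in $K_{\eps}$ and meet the segment from $q^*$ to $\bd K$. That segment has infinite Hilbert length, and hop counts in $G_{\eps}(K)$ are controlled by Hilbert distance, so this gives no bound.

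\emph{Step 3, packing alternative.} Corollary~\ref{exp-con-ellipse.lem} requires $E^{\lambda}(x) \cap E^{\lambda}(y) \neq \emptyset$ at a common scale $\lambda$. The tail ellipsoids are only known to meet the ray somewhere beyond $q^*$, not to meet each other, so the corollary does not apply as stated.

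\emph{Euclidean fallback.} This argument fails. An ellipsoid $E'(x)$ meeting a segment of length $O(\eps)$ need not have $E''(x)$ within $O(\eps)$ of that segment. Near $\bd K$ these ellipsoids can have Euclidean diameter $\Theta(\sqrt{\eps})$ (smooth boundary) or even $\Theta(1)$ (near a facet). Packing balls of radius $\Omega(\eps)$ into the region the $E''(x)$ can then occupy gives a count polynomial in $1/\eps$, not $O(1)$.

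The missing ingredient is a bound on how far past $q^*$ the walk can go. Write $t = 2\eps/\gamma$, so $K_{\eps} = (1-t)K$, and let $z$ be the point where the ray meets $\bd K$, so that $q^* = (1-t)z$.

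\begin{enumerate}
\item For $x \in K_{\eps}$ we have $E'(x) \subseteq M^{1/2}(x) \subseteq x + \frac12 (K - x) = \frac12 x + \frac12 K \subseteq (1 - t/2)K = K_{\eps/2}$.
\item Hence every point the walk reaches lies on the segment from $q^*$ to $(1-t/2)z$. This segment has Hilbert length at most $\frac12 \ln 2 + O(\eps)$: the ratio involving $z$ is exactly $2$, and the other ratio is $1 + O(\eps)$.
\item By Corollary~\ref{macbeath-to-hilbert.cor} with $\lambda = \frac12$, each center $x$ visited in the tail satisfies $d_K(x, q^*) \le \frac12 \ln 3 + \frac12 \ln 2 + O(\eps) = O(1)$.
\item Both $x$ and $q^*$ lie in $K_{\eps}$. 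Applying Lemma~\ref{apm-finger-search.lem} to the segment $q^* x$, starting from any ellipsoid containing $q^*$, yields a walk of $O(1)$ edges of $G_{\eps}(K)$. It ends at an ellipsoid containing $x$, which therefore meets $E'(x)$. So $x$ is within $O(1)$ hops of that starting vertex.
\item By Lemma~\ref{apm-overlap-bound.lem} only $O(1)$ vertices are within $O(1)$ hops. The greedy walk never revisits a vertex (your step 2), so the tail costs $O(1)$ iterations.
\end{enumerate}

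This is exactly your step 3, made rigorous.
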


\begin{proof}
If $q$ lies in $K_{\eps}$, the result follows directly from Lemmas~\ref{hilbert-radius.lem} and~\ref{apm-finger-search.lem}. If not, let $q'$ denote the point where the ray from to origin to $q$ intersects the boundary of $K_{\eps}$. The lemma bounds the time to walk the ray to $q'$, and the search procedure terminates at most one iteration later, either reporting ``yes'' if $q$ is in the final ellipsoid or reporting ``no'' if not.
\end{proof}

Finally, we show that the storage needed for our data structure is proportional Dudley's bound on the complexity of Hausdorff approximating a convex body by halfspaces. As established in Theorem~\ref{size-bound.thm}, the size of the Delone set $X_{\eps}(K)$ is $O(1/\eps^{(d-1)/2})$. Because the graph has constant degree, and each vertex of the graph stores only a constant amount of additional information (the coordinates of the associated point and the equation of the covering ellipsoid), the same bound applies to the size of the entire data structure.

\begin{lemma} \label{apm-space-bound.lem}
The storage required $G_{\eps}(K)$ is $O(1/\eps^{(d-1)/2})$.
\end{lemma}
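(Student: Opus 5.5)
The storage bound follows by combining two results already in hand: the bound on the number of vertices of $G_{\eps}(K)$ and the bound on its maximum degree. The vertex set of $G_{\eps}(K)$ is exactly the Delone set $X_{\eps}$. By Theorem~\ref{size-bound.thm}, $\sizeof{X_\eps} = O(1/\eps^{(d-1)/2})$. This applies to any maximal packing of the $E''$ ellipsoids inside $K_{\eps}$, so it holds for whichever maximal set the construction happens to produce.

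Next I bound the edges. By Lemma~\ref{apm-overlap-bound.lem} every vertex has degree $O(1)$. The handshake identity therefore gives
\[
  |E(G_{\eps}(K))| ~ = ~ \frac{1}{2}\sum_{x \in X_{\eps}} \deg(x) ~ = ~ O(\sizeof{X_\eps}).
\]
Stored as adjacency lists, the graph uses $O(\sizeof{X_\eps})$ space.

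Finally I account for the data attached to each vertex. Each vertex $x$ stores its coordinates and the ellipsoid $E'(x)$. The ellipsoid can be given by its center and a $d \times d$ positive-definite matrix, which is $O(d^2) = O(1)$ numbers since $d$ is a constant. The root pointer $x_0$ adds $O(1)$. The total storage is thus $O(\sizeof{X_\eps}) = O(1/\eps^{(d-1)/2})$.

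There is no real obstacle here, since the substantive work lives in Theorem~\ref{size-bound.thm} (volume entropy) and Lemma~\ref{apm-overlap-bound.lem} (expansion-containment plus packing). The only point needing care is the storage model: each ellipsoid should count as $O(1)$ real numbers, regardless of how $K$ was represented.
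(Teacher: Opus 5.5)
Your proposal is correct and follows the paper's own argument: $O(1/\eps^{(d-1)/2})$ vertices by Theorem~\ref{size-bound.thm}, constant degree by Lemma~\ref{apm-overlap-bound.lem}, and $O(1)$ stored data per vertex (its center and the equation of $E'(x)$). Your handshake count and your remark on the storage model only spell out what the paper states in a single sentence.
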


To finish the development, we describe a straightforward approach to construct the Macbeath-based Delone set $X_\eps$ as needed for the data structure. We note that obtaining the best dependencies on $\eps$ in the construction time likely involves fairly sophisticated methods (see, e.g., Arya et al.~\cite{AFM17b}). The next lemma is a direct adaptation of a related result in~\cite{AbM18}; see Appendix~\ref{deferred.sec} for the proof.

\begin{restatable}{lemma}{APMConstruction} \label{apm-construction.lem}
Given a convex body $K \subset \RE^d$ represented as the intersection of $n$ halfspaces and $\eps > 0$, $G_{\eps}(K)$ can be computed in time $O(n + 1/\eps^{O(d)})$, where the constant in the exponent does not depend on $\eps$ or $d$.
\end{restatable}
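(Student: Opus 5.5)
The plan is to build $X_\eps$ greedily from a fine candidate grid and then compute the intersection graph by brute force. First, in $O(n)$ time, apply the affine map $T$ of Section~\ref{canonical_perturbation.sec} to put $K$ in $(1/d)$-canonical form. Since Macbeath regions and ellipsoids are affine invariants, all later computations can be done on $T(K)$ and mapped back.

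\textbf{Candidates.} Take as candidates the points of a uniform grid of spacing $\Theta(\eps)$ that lie in $K_\eps$. There are $O(1/\eps^d)$ of them. For each candidate $x$ we need the ellipsoid $E(x)$, the maximum-volume ellipsoid centered at $x$ inside $K$. Computing it directly costs $O(n)$ per point, which would give $O(n/\eps^d)$ in total, too much. The fix is to first replace $K$ by a polytope $P$ with $O(1/\eps^{O(d)})$ facets, for example via a Dudley-type approximation computed in $O(n + 1/\eps^{O(d)})$ time. We then argue that Macbeath ellipsoids of points in $K_\eps$ with respect to $P$ approximate those with respect to $K$ up to constant factors. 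This holds because every $x\in K_\eps$ has $B(x,\eps)\subseteq K$, so an approximation error of $O(\eps^{2})$ (or any sufficiently small multiple of $\eps$) changes $M(x)$ only by a constant-factor scaling. Each ellipsoid then costs $1/\eps^{O(d)}$ via the linear-time algorithm of~\cite{ChM96}, and all of them together cost $1/\eps^{O(d)}$.

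\textbf{Greedy packing and robustness.} Scan the candidates and add $x$ to $X$ whenever $E''(x)$ is disjoint from $E''(y)$ for every already chosen $y$. A naive check costs $O(|X|)$ per candidate, so $1/\eps^{O(d)}$ overall, which is within budget.

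The main obstacle is that a grid-maximal packing is not maximal over all of $K_\eps$, and the ellipsoids are only approximate. This is not fatal, because the covering proof of Lemma~\ref{packing-covering.lem} only needs the following: for each $y\in K_\eps$, some chosen $x$ has $E''(x)$ meeting a slightly shrunken $E''(y)$. To get it, pick the grid spacing small enough that every $y\in K_\eps$ has a grid candidate $g$ with $E^{\lambda_p/2}(g) \supseteq$ a constant-fraction copy of $E''(y)$. This is possible because $E''(y)$ contains $B(y,\Omega(\eps))$. Expansion-containment (Corollary~\ref{exp-con-ellipse.lem}) then still gives $E''(y)\subseteq E'(x)$ after adjusting constants. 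If that is necessary, I would slightly change $\lambda_p$ in the construction while keeping $\lambda_c$, and note that Theorem~\ref{size-bound.thm} and Lemma~\ref{apm-overlap-bound.lem} remain valid for any constant scaling factors.

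\textbf{Edges.} With $|X|=O(1/\eps^{(d-1)/2})$, test all pairs $x,y$ for $E'(x)\cap E'(y)\neq\emptyset$. Each test is a constant-size convex program, so this costs $O(1/\eps^{d-1})$. Finally, choose as root any $x_0$ whose $E'(x_0)$ contains $O$. The total running time is $O(n + 1/\eps^{O(d)})$.
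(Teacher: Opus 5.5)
Your plan matches the paper's proof in its main moves. You put $K$ in canonical form and replace it by a polytope with $1/\eps^{O(d)}$ facets, so each Macbeath ellipsoid costs $1/\eps^{O(d)}$ with the linear-time algorithm. You restrict centers to a grid of spacing $\Theta(\eps)$, build the maximal packing greedily by brute force, and test ellipsoid overlaps by brute force; testing all pairs is fine at $O(1/\eps^{d-1})$.

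The one real difference is how the polytope is used. The paper does not treat $K'$ as a proxy for $K$. It builds the structure for $K'$ itself at accuracy $\eps/2$, so packing, covering, expansion-containment, and the degree and size bounds hold verbatim for $K'$. Your route instead needs a constant-factor sandwich between $E_P(x)$ and $E_K(x)$ for $x \in K_\eps$, which is true when $P$ is within a small multiple of $\eps$ of $K$, plus a recheck of all constants. On the other side, your handling of grid-maximal versus truly maximal packings is more careful than the paper's one-line claim that any Delone set can be perturbed onto the grid. You chain expansion-containment through a nearby grid point and shrink $\lambda_p$ if needed.

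One step, however, breaks the claimed running time. Your candidates are the grid points lying in $K_\eps = (1-2\eps/\gamma)K$. Deciding this means testing each of the $\Theta(1/\eps^d)$ grid points against the $n$ halfspaces of $K$, which costs $O(n/\eps^d)$. This is the same obstacle you identified for the ellipsoids, and the polytope $P$ does not remove it as long as the candidate region is defined by $K$.

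The fix is to define the candidate region through $P$ as well and rescale $\eps$. For example, take grid points in $(1-2\eps/\gamma)P$; this set contains $K_{2\eps}$ when $P \subseteq K$ is within Hausdorff distance less than $\eps$ of $K$, and it lies in $K_\eps$. More simply, run the entire construction on $P$ as the paper does, which also makes the Macbeath-perturbation argument unnecessary.
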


This completes the proof of Theorem~\ref{main.thm}.

\section{Concluding Remarks} \label{conc.sec}
In this paper, we consider the derivation of geometric query structures intrinsic to the natural metrics defined by the data at hand. Leveraging prior work on the approximation of convex polytopes, we present a self-contained development of Delone sets in the Hilbert metric associated with a convex body $K$ in $\RE^d$. The Delone set is realized by a covering of a scaled-down copy of $K$ using a collection of ellipsoids derived from Macbeath regions. We further simplify the proof of the size bound by sidestepping the economic cap cover construction, utilized crucially in prior work, and present a new proof based on the elegant comparison theorem of Bishop exploiting the relationship between different metrics. We proceed to demonstrate the application of the Delone set to the design of an efficient data structure for approximate polytope membership testing, with optimal query time and storage complexity. Queries are answered by shooting rays from the origin and navigating through the ellipsoidal cover. Notably, the intrinsic viewpoint enables finger searching, where the query time can be upper bounded by the distance traveled in the Hilbert metric. For future work, we anticipate further developments of geometric algorithms in the Hilbert metric and further applications of the Delone set construction.


\pdfbookmark[1]{References}{s:ref}
\bibliographystyle{plainurl}
\bibliography{convex}

@string{ANN_MATH = {Ann.\ of Math.}}

@string{CGTA = {Comput.\ Geom.\ Theory Appl.}}

@string{DCG = {Discrete Comput.\ Geom.}}

@string{DMKD = {Data Min.\ Knowl.\ Discov.}}

@string{FOCS_2001 = {Proc.\ 42nd Annu.\ IEEE Sympos.\ Found.\ Comput.\ Sci.}}

@string{J_APPROX_THY = {J.\ Approx.\ Theory}}

@string{JACM = {J.\ Assoc.\ Comput.\ Mach.}}

@string{JALG = {J.\ Algorithms}}

@string{MATHIKA = {Mathematika}}

@string{RCMP = {Rend.\ Circ.\ Mat.\ Palermo}}

@string{SIB_MATH_J = {Siberian Math.\ J.}}

@string{SICOMP = {SIAM J.\ Comput.}}

@string{SOCG_1994 = {Proc.\ Tenth Annu.\ Sympos.\ Comput.\ Geom.}}

@string{SOCG_1996 = {Proc.\ 12th Annu.\ Sympos.\ Comput.\ Geom.}}

@string{SOCG_2000 = {Proc.\ 16th Annu.\ Sympos.\ Comput.\ Geom.}}

@string{SOCG_2017 = {Proc.\ 33rd Internat.\ Sympos.\ Comput.\ Geom.}}

@string{SOCG_2023 = {Proc.\ 39th Internat.\ Sympos.\ Comput.\ Geom.}}

@string{SODA_1999 = {Proc.\ Tenth Annu.\ ACM-SIAM Sympos.\ Discrete Algorithms}}

@string{SODA_2004 = {Proc.\ 15th Annu.\ ACM-SIAM Sympos.\ Discrete Algorithms}}

@string{SODA_2017 = {Proc.\ 28th Annu.\ ACM-SIAM Sympos.\ Discrete Algorithms}}

@string{SODA_2020 = {Proc.\ 31st Annu.\ ACM-SIAM Sympos.\ Discrete Algorithms}}

@string{SODA_2023 = {Proc.\ 34th Annu.\ ACM-SIAM Sympos.\ Discrete Algorithms}}

@string{STOC_2006 = {Proc.\ 38th Annu.\ ACM Sympos.\ Theory Comput.}}

@string{STOC_2009 = {Proc.\ 41st Annu.\ ACM Sympos.\ Theory Comput.}}

@string{SWAT_2018 = {Proc.\ 16th Scand.\ Workshop Algorithm Theory}}

@inproceedings{AbM18,
 author = {A. Abdelkader and D. M. Mount},
 title = {Economical {Delone} sets for approximating convex bodies},
 booktitle = SWAT_2018,
 year = {2018},
 pages = {4:1--4:12},
 doi = {10.4230/LIPIcs.SWAT.2018.4},
}

@article{AHV04,
 author = {P. K. Agarwal and S. Har-Peled and K. R. Varadarajan},
 title = {Approximating extent measures of points},
 journal = JACM,
 volume = {51},
 year = {2004},
 pages = {606--635},
 doi = {10.1145/1008731.1008736},
}

@inproceedings{AAFM20,
 author = {R. Arya and S. Arya and G. D. {da Fonseca} and D. M. Mount},
 title = {Optimal bound on the combinatorial complexity of approximating polytopes},
 booktitle = SODA_2020,
 year = {2020},
 pages = {786--805},
 doi = {10.1137/1.9781611975994.48},
}

@inproceedings{AFM17a,
 author = {S. Arya and G. D. da Fonseca and D. M. Mount},
 title = {Optimal approximate polytope membership},
 booktitle = SODA_2017,
 pages = {270--288},
 year = 2017,
 doi = {10.1137/1.9781611974782.18},
}

@inproceedings{AFM17b,
 author = {S. Arya and G. D. da Fonseca and D. M. Mount},
 title = {Near-optimal $\varepsilon$-kernel construction and related problems},
 booktitle = SOCG_2017,
 pages = {10:1--15},
 year = 2017,
 doi = {10.4230/LIPIcs.SoCG.2017.10},
 url = {https://arxiv.org/abs/1703.10868},
}

@article{AFM17c,
 author = {S. Arya and G. D. da Fonseca and D. M. Mount},
 title = {On the combinatorial complexity of approximating polytopes},
 journal = DCG,
 volume = {58},
 number = {4},
 year = {2017},
 pages = {849--870},
 doi = {10.1007/s00454-016-9856-5},
}

@article{AFM18a,
 author = {S. Arya and G. D. da Fonseca and D. M. Mount},
 title = {Approximate polytope membership queries},
 journal = SICOMP,
 volume = {47},
 number = {1},
 year = {2018},
 pages = {1--51},
 doi = {10.1137/16M1061096},
}

@inproceedings{AFM23,
 author = {S. Arya and G. D. da Fonseca and D. M. Mount},
 title = {Economical Convex Coverings and Applications},
 booktitle = SODA_2023,
 pages = {1834--1861},
 year = 2023,
 doi  = {10.1137/1.9781611977554.ch70},
}

@incollection{Bal97,
 author = {K. Ball},
 title = {An Elementary Introduction to Modern Convex Geometry},
 booktitle = {Flavors of Geometry},
 editor = {S. Levy},
 series = {MSRI Publications},
 publisher = {Cambridge University Press},
 address = {Cambridge, UK},
 volume = {31},
 year = {1997},
 pages = {1--58},
 isbn = {9780521629621},
}

@article{Bar00,
 author = {I. B{\'a}r{\'a}ny},
 title = {The technique of {M}-regions and cap-coverings: {A} survey},
 journal = RCMP,
 volume = {65},
 year = {2000},
 pages = {21--38},
 url = {https://users.renyi.hu/~barany/},
}

@incollection{Bar07,
 author = {I. B{\'a}r{\'a}ny},
 title = {Random polytopes, convex bodies, and approximation},
 booktitle = {Stochastic Geometry},
 editor = {W. Weil},
 series = {Lecture Notes in Mathematics},
 publisher = {Springer},
 volume = {1892},
 year = {2007},
 pages = {77--118},
 doi = {10.1007/978-3-540-38175-4_2},
}

@article{BeH13,
 author = {Y. Benoist and D. Hulin},
 title = {Cubic differentials and finite volume convex projective surfaces},
 journal = {Geom.\ Topol.},
 volume = {17},
 year = {2013},
 pages = {595--620},
 doi = {10.2140/gt.2013.17.595},
}

@inproceedings{BKL06,
 author = {A. Beygelzimer and S. Kakade and J. Langford},
 title = {Cover trees for nearest neighbor},
 booktitle = {Proc.\ 23rd Internat.\ Conf.\ on Mach.\ Learn.},
 year = {2006},
 pages = {97--104},
 doi = {10.1145/1143844.1143857},
}

@incollection{Bro05,
 author = {G. S. Brodal},
 title = {Finger Search Trees},
 booktitle = {Handbook of Data Structures and Applications},
 edition = {Second},
 editors = {D. P. Mehta and S. Sahni},
 publisher = {CRC Press},
 year = {2005},
 doi = {10.1201/9781315119335},
}

@article{BCP93,
 author = {H. Br{\"o}nnimann and B. Chazelle and J. Pach},
 title = {How hard is halfspace range searching?},
 journal = DCG,
 volume = {10},
 year = {1993},
 pages = {143--155},
 doi = {10.1007/BF02573971},
}

@article{BrI76,
 author = {E. M. Bronshteyn and L. D. Ivanov},
 title = {The approximation of convex sets by polyhedra},
 journal = SIB_MATH_J,
 volume = {16},
 year = {1976},
 pages = {852--853},
 doi = {10.1007/BF00967115},
}

@article{Bur98,
 author = {C. J. C. Burges},
 title = {A Tutorial on Support Vector Machines for Pattern Recognition},
 journal = DMKD,
 volume = {2},
 number = {2},
 year = {1998},
 pages = {121--167},
 doi = {10.1023/A:1009715923555},
 issn = {1384-5810},
}

@inproceedings{Cha96a,
 author = {T. M. Chan},
 title = {Fixed-dimensional linear programming queries made easy},
 booktitle = SOCG_1996,
 year = {1996},
 pages = {284--290},
 doi = {10.1145/237218.237397},
}

@article{Cha96b,
 author = {T. M. Chan},
 title = {Output-sensitive results on convex hulls, extreme points, and related problems},
 journal = DCG,
 volume = {16},
 year = {1996},
 pages = {369--387},
 doi = {10.1007/BF02712874},
}

@article{ChM96,
 author = {B. Chazelle and J. Matou\v{s}ek},
 title = {On Linear-Time Deterministic Algorithms for Optimization Problems in Fixed Dimension},
 journal = JALG,
 volume = {21},
 year = {1996},
 pages = {579--597},
 doi = {10.1006/jagm.1996.0060},
}

@inproceedings{Cla94,
 author = {K. L. Clarkson},
 title = {An algorithm for approximate closest-point queries},
 booktitle = SOCG_1994,
 year = {1994},
 pages = {160--164},
 doi = {10.1145/177424.177609},
}

@inproceedings{Cla06,
 author = {K. L. Clarkson},
 title = {Building triangulations using $\varepsilon$-nets},
 booktitle = STOC_2006,
 year = {2006},
 pages = {326--335},
 doi = {10.1145/1132516.1132564},
}

@article{Dud74,
 author = {R. M. Dudley},
 title = {Metric entropy of some classes of sets with differentiable boundaries},
 journal = J_APPROX_THY,
 volume = {10},
 number = {3},
 year = {1974},
 pages = {227--236},
 doi = {10.1016/0021-9045(74)90120-8},
}

@inproceedings{EGS99,
 author = {J. Erickson and L. J. Guibas and J. Stolfi and L. Zhang},
 title = {Separation-sensitive collision detection for convex objects},
 booktitle = SODA_1999,
 year = {1999},
 pages = {327--336},
}

@article{ELR70,
 author = {G. Ewald and D. G. Larman and C. A. Rogers},
 title = {The directions of the line segments and of the $r$-dimensional balls on the boundary of a convex body in {Euclidean} space},
 journal = MATHIKA,
 volume = {17},
 year = {1970},
 pages = {1--20},
 doi = {10.1112/S0025579300002655},
}

@inproceedings{GeM21,
 author = {A. H. Gezalyan and D. M. Mount},
 title = {Voronoi Diagrams in the {Hilbert} Metric},
 booktitle = SOCG_2023,
 pages = {35:1--35:16},
 year = {2023},
 volume = {258},
 doi = {10.4230/LIPIcs.SoCG.2023.35},
}

@inproceedings{Har01,
 author = {S. Har-Peled},
 title = {A replacement for {Voronoi} diagrams of near linear size},
 booktitle = FOCS_2001,
 year = {2001},
 pages = {94--103},
 doi = {10.1109/SFCS.2001.959884},
}

@article{HaM06,
 author = {S. Har-Peled and M. Mendel},
 title = {Fast construction of nets in low dimensional metrics, and their applications},
 journal = SICOMP,
 volume = {35},
 year = {2006},
 pages = {1148--1184},
 doi = {10.1145/1064092.1064117},
}

@article{Hil95,
 author = {D. Hilbert},
 title = {{\"U}ber die gerade {Linie} als k{\" u}rzeste {Verbindung} zweier {Punkte}},
 journal = {Mathematische Annalen},
 volume = {46},
 year = {1895},
 pages = {91--96},
 doi = {10.1007/BF02096204},
}

@article{KLM97,
  author = {R. Kannan and L. Lov{\'a}sz and M. Simonovits},
  title = {Random walks and an {$O^*(n^5)$} volume algorithm for convex bodies},
  journal = {Random Structures \& Algorithms},
  volume = {11},
  pages = {1--50},
  year = {1997},
  doi = {10.1002/(SICI)1098-2418(199708)11:1<1::AID-RSA1>3.0.CO;2-X},
}

@inproceedings{RH09,
 author = {R. Kannan and H. Narayanan},
 title = {Random Walks on Polytopes and an Affine Interior Point Method for Linear Programming},
 booktitle = STOC_2009,
 year = {2009},
 pages = {561--570},
 doi = {10.1145/1536414.1536491},
}

@inproceedings{KrL04,
 author = {R. Krauthgamer and J. R. Lee},
 title = {Navigating nets: {Simple} algorithms for proximity search},
 booktitle = SODA_2004,
 year = {2004},
 pages = {798--807},
 url = {https://dl.acm.org/doi/abs/10.5555/982792.982913},
}

@article{Mac52,
 author = {A. M. Macbeath},
 title = {A theorem on non-homogeneous lattices},
 journal = ANN_MATH,
 volume = {56},
 year = {1952},
 pages = {269--293},
 doi = {10.2307/1969800},
}

@article{Mat92,
 author = {J. Matou\v{s}ek},
 title = {Reporting points in halfspaces},
 journal = CGTA,
 volume = {2},
 issue = {3},
 year = {1992},
 pages = {169--186},
 doi = {10.1016/0925-7721(92)90006-E},
}

@article{Mat93a,
 author = {J. Matou\v{s}ek},
 title = {Linear optimization queries},
 journal = JALG,
 volume = {14},
 number = {3},
 year = {1993},
 pages = {432--448},
 doi = {10.1006/jagm.1993.1023},
}

@article{MaS93,
 author = {J. Matou{\v{s}}ek and O. Schwarzkopf},
 title = {On ray shooting in convex polytopes},
 journal = DCG,
 volume = {10},
 year = {1993},
 pages = {215--232},
 doi = {10.1007/BF02573975},
}

@article{NaV22,
 author = {M. Nasz{\'o}di and M. Venzin},
 title = {Covering convex bodies and the closest vector problem},
 journal = DCG,
 volume = {67},
 year = {2022},
 pages = {1191--1210},
 doi = {10.1007/s00454-022-00392-x},
}

@inproceedings{NiS17,
 author = {F. Nielsen and L. Shao},
 title = {{On Balls in a Hilbert Polygonal Geometry (Multimedia Contribution)}},
 booktitle = SOCG_2017,
 pages = {67:1--67:4},
 year = {2017},
 doi = {10.4230/LIPIcs.SoCG.2017.67},
}

@book{PaT14,
 author = {A. Papadopoulos and M. Troyanov},
 title = {Handbook of {Hilbert} Geometry},
 publisher = {EMS Press},
 year = {2014},
 doi = {10.4171/147},
}

@inproceedings{Ram00,
 author = {E. A. Ramos},
 title = {Linear programming queries revisited},
 booktitle = SOCG_2000,
 year = {2000},
 pages = {176--181},
 doi = {10.1145/336154.336198},
}

@book{Sut09,
 author = {W. A. Sutherland},
 title = {Introduction to metric and topological spaces},
 publisher = {Oxford University Press},
 year = {1975},
 isbn = {9780199563081},
}

@article{Tho17,
 author = {N. Tholozan},
 title = {Volume entropy of {Hilbert} metrics and length spectrum of {Hitchin} representations into {PSL$(3,\mathbb{R})$}},
 journal = {Duke Math.\ J.},
 volume = {166},
 number = {7},
 year = {2017}, 
 pages = {1377--1403},
 doi = {10.1215/00127094-00000010X},
}

@unpublished{VeW16,
 author = {C. Vernicos and C. Walsh},
 title = {Flag-approximability of convex bodies and volume growth of {Hilbert} geometries},
 year = {2016},
 note = {HAL Archive (hal-01423693i)},
 url = {https://hal.archives-ouvertes.fr/hal-01423693},
}


\appendix

\section{Deferred Proofs} \label{deferred.sec}
In this appendix, we present proofs of some technical lemmas, which were deferred from the main body.

\medskip

\MacbeathToHilbert*

\begin{proof}
Let $(x',x,y,y')$ be as in Figure~\ref{hilbert.fig}. We can express $d_K(x, y)$ as $\frac{1}{2} \ln \left(  \frac{\|y' - x\|}{\|y' - y\|} \frac{\|x' - y\|}{\|x' - x\|} \right)$. Since these points are collinear, $y \in M^\lambda(x)$ if and only if
\[
  \max\left( \frac{\|y - x\|}{\|y' - x\|}, \frac{\|y - x\|}{\|x' - x\|} \right) 
	~ \leq ~ \lambda \quad \text{that is,} \quad \frac{\|y - x\|}{\|y' - x\|} \leq \lambda \text{ and } \frac{\|y - x\|}{\|x' - x\|} \leq \lambda.
\]
This corresponds to the following bounds on the ratios used in defining $d_K(x, y)$.
\begin{align*}
  \frac{\|y - x\|}{\|y' - x\|} ~ \leq ~ \lambda
	&	~~ \Leftrightarrow ~~ \|y' - y\| ~ \geq ~ (1 - \lambda) \|y' - x\| 
		~~ \Leftrightarrow ~~  \frac{\|y' - x\|}{\|y' - y\|} ~ \leq ~ \frac{1}{1 - \lambda}, \\
  \frac{\|y - x\|}{\|x' - x\|} ~ \leq ~ \lambda
	&	~~ \Leftrightarrow ~~ \|x' - y\| ~ \leq ~ (1 + \lambda) \|x' - x\| 
		~~ \Leftrightarrow ~~ \frac{\|x' - y\|}{\|x' - x\|} ~ \leq ~ 1 + \lambda.
\end{align*}
We conclude $d_K(x, y) \leq \frac{1}{2} \ln \frac{1 + \lambda}{1 - \lambda}$ for all $y \in M^\lambda(x)$, implying $M^\lambda(x) \subseteq B_K(x, \frac{1}{2}\ln \frac{1 + \lambda}{1 - \lambda})$. On the other hand, observing that $1 + \lambda \leq \inv{1-\lambda}$, any $y \in K$ satisfying $d_K(x, y) = \frac{1}{2} \ln(1 + \lambda)$ clearly lies in $M^\lambda(x)$ implying $B_K(x, \frac{1}{2}\ln (1 + \lambda)) \subseteq M^\lambda(x)$.
\end{proof}

\ExpansionContainment*

\begin{proof}
By definition of the scaled Macbeath region, we can express any $z \in M^{\lambda}(x) \cap M^{\lambda}(y)$ as $z = y + \lambda (k_1 - y) = x + \lambda(x - k_2)$, for some $k_1, k_2 \in K$. Hence, we have
\[
  y 
	~ = ~ \frac{1 + \lambda}{1 - \lambda} x - \frac{\lambda}{1 - \lambda}(k_1 + k_2).
\]
It suffices to show that if $v \in M^{\alpha\lambda}(y)$ then $v \in M^{\beta\lambda}(x)$. To see this, observe first that for some $k_3 \in K$, we can express $v$ as
\begin{eqnarray*}
  v
	& = & y + \alpha\lambda(y - k_3)
	~ = ~ (1 + \alpha\lambda)y - \alpha\lambda \kern+1pt k_3 \\
	& = & (1 + \alpha\lambda) \left( \frac{1 + \lambda}{1 - \lambda} x - \frac{\lambda}{1 - \lambda}(k_1 + k_2) \right) - \alpha\lambda \kern+1pt k_3.
\end{eqnarray*}
We can express the coefficient of $x$ as
\[
  \frac{(1 + \alpha \lambda)(1 + \lambda)}{1 - \lambda}
	~ = ~ 1 + \frac{2\lambda + \alpha\lambda(1+\lambda)}{1 - \lambda}
	~ = ~ 1 + \beta\lambda,
\]
and we can express the negation of the sum of the $k_i$ terms as
\begin{eqnarray*}
  \frac{\lambda}{1 - \lambda} ( (1 + \alpha\lambda) (k_1 + k_2) + \alpha(1 - \lambda) k_3 )
        & = & \beta\lambda \left( \frac{1 + \alpha\lambda}{2 + \alpha(1+\lambda)} (k_1 + k_2) + \frac{\alpha(1 - \lambda)}{2 + \alpha(1+\lambda)} k_3 \right) \\
        & = & \beta\lambda ( \gamma_1 k_1 + \gamma_2 k_2 + \gamma_3 k_3 ),
\end{eqnarray*}
where $\gamma_1$, $\gamma_2$, and $\gamma_3$ are non-negative and sum to unity. The convex combination $\gamma_1 k_1 + \gamma_2 k_2 + \gamma_3 k_3$ defines some $k_4 \in K$ such that
\[
  v
    ~  =  ~ (1 + \beta\lambda) x - \beta\lambda k_4
    ~  =  ~ x + \beta\lambda (x - k_4)
    ~ \in ~ x + \beta\lambda (x - K).
\]
To complete the proof, it remains to show that $v \in x + \beta \lambda(K - x)$. This follows by a symmetrical argument, which we omit. (This other case holds under the weaker condition that $c \ge \frac{3-\lambda}{1+\lambda}$.)
\end{proof}

\PackingCovering*

\begin{proof}
By definition, the $E''$ ellipsoids define a packing. To show that the $E'$ ellipsoids define a covering of $K_{\eps}$, take any point $y \in K_{\eps}$. By the maximality of $X_{\eps}$, there exists $x \in X_{\eps}$ such that $E''(y) \cap E''(x) \neq \emptyset$. By applying Lemma~\ref{exp-con-ellipse.lem} (with $\lambda = \lambda_p$, $\alpha = 1$, and $\beta = (3 + \lambda_p)\sqrt{d}/(1 - \lambda_p)$) we have 
\[
    y 
        ~ \in       ~ E''(y) 
        ~ =         ~ E^{\lambda_p}(y) 
        ~ \subseteq ~ E^{\beta\lambda_p}(x).
\]
Clearly, $\lambda_p < \inv{5}$, from which we obtain 
\[
	\beta\lambda_p
		~ = ~ \frac{(3+\lambda_p)\sqrt{d}}{1 - \lambda_p}\lambda_p 
		~ < ~ \frac{(3 + 1/5)\sqrt{d}}{4/5} \inv{8\sqrt{d}}
		~ = ~ \inv{2}
		~ = ~ \lambda_c.
\]
Therefore, $E^{\beta\lambda_p}(x) \subset E^{\lambda_c}(x) = E'(x)$. Thus, every point of $K$ is contained in the interior of some $E'$ ellipsoid, implying that they form a covering of $K$. Finally, since $\lambda_c < 1$ every ellipsoid $E'(x)$ lies within $K$, and hence so does their union.
\end{proof}

\APMOverlapBound*

\begin{proof}
Consider any point $x \in X_{\eps}$, and let $\Gamma(x)$ denote its neighbors in $G_{\eps}(K)$, that is $\Gamma(x) = \{y \in X_{\eps} : E'(x) \cap E'(y) \neq \emptyset\}$. Recall that $E''(x) = E^{\lambda_p}(x)$ and $E'(x) = E^{\lambda_c}(x)$, where $\lambda_p = \inv{8\sqrt{d}}$ and $\lambda_c = \inv{2}$. Let $\lambda_0 = 3 \sqrt{d}$.

By applying Lemma~\ref{exp-con.lem} (with $\lambda = \lambda_c$, $\alpha = \frac{\lambda_p}{\lambda_c} = \inv{4\sqrt{d}}$, and $\beta = \frac{2 + \alpha(1+\lambda_c)}{1-\lambda_c}\sqrt{d}$), we have 
\begin{equation}
	E^{\lambda_p}(y) 
		~ =         ~ E^{\alpha \lambda_c}(y) 
		~ \subseteq ~ E^{\beta\lambda_c}(x)
		~ \subset   ~ E^{\lambda_0}(x),
    \label{neighbor-inclusion.eqn}
\end{equation}
where the last inclusion follows because 
\[
    \beta\lambda_c
        ~ = ~ \left(\frac{2 + \alpha(1+\lambda_c)}{1-\lambda_c}\sqrt{d}\right) \lambda_c
        ~ = ~ 2\sqrt{d} + \frac{3}{8}
        ~ < ~ 3 \sqrt{d}
        ~ = ~ \lambda_0.
\]
Swapping the roles of $x$ and $y$ in the application of Lemma~\ref{exp-con.lem}, we have $E^{\lambda_p}(x) \subset E^{\lambda_0}(y)$. Recalling that $\vol(C^\lambda) = \lambda^d \cdot \vol(C)$, when $C^\lambda$ is a uniform $\lambda$-factor scaling of any bounded full-dimensional set $C \subset \RE^d$, this implies that
\begin{equation}
    \vol(E^{\lambda_p}(y))
        ~ = ~ \left(\frac{\lambda_p}{\lambda_0}\right)^{\kern-2pt d} \vol(E^{\lambda_0}(y))
        ~ > ~ \left(\frac{\lambda_p}{\lambda_0}\right)^{\kern-2pt d} \vol(E^{\lambda_p}(x))
        ~ = ~ \left(\frac{\lambda_p}{\lambda_0}\right)^{\kern-2pt 2 d} \vol(E^{\lambda_0}(x)).
        \label{neighbor-vol.eqn}
\end{equation}
By Eq.~\eqref{neighbor-inclusion.eqn}, the packing ellipsoid of every neighbor $y$ of $x$ lies within $E^{\lambda_0}(x)$, and by Eq.~\eqref{neighbor-vol.eqn} the volume of each packing ellipsoids is at least a constant fraction (namely, $(\lambda_p/\lambda_0)^{2 d}$) of the volume of $E^{\lambda_0}(x)$. Since these packing ellipsoids are pairwise disjoint, the total number of neighbors of $x$ is a constant. 
\end{proof}

\APMConstruction*

\begin{proof}
First, we transform $K$ into canonical form, and replace it with an $\frac{\eps}{2}$-approximation $K'$ of itself. This can be done in $O(m + 1/\eps^{O(d)})$ time, so that $K'$ is bounded by $O(1/\eps^{(d-1)/2})$ halfspaces (see, e.g., \cite{AFM18a}). We then build the data structure to solve APM queries to an accuracy of $(\eps/2)$, so that the total error is $\eps$. 

To compute the set $X_{\eps}(K)$, recall that the Hausdorff distance between $K$ and $K_{\eps}$ is $\eps$. As observed at the end of Section~\ref{delone.sec}, $E''(x)$ contains a Euclidean ball of radius $\Omega(\lambda_p \eps) = \Omega(\eps)$. We restrict the points of $X_{\eps}(K)$ to come from the vertices of a square grid whose side length is half this radius. Since $K$ is in canonical form, it suffices to generate $O(1/\eps^{O(d)})$ grid points. By decreasing the value of $\eps$ slightly (by a constant factor), it is straightforward to show that any Delone set can be perturbed so that its centers lie on this grid.

Each Macbeath ellipsoid can be computed in time linear in the number of halfspaces bounding $K'$, which is $O(1/\eps^{O(d)})$~\cite{ChM96}. The maximal set is computed by brute force, repeatedly selecting a point $x$ from the grid, computing $E''(x)$, and marking the points of the grid that it covers until all points interior to $K$ are covered. Finally, to build the graph $G_\eps(K)$, we find the ellipsoids overlapping $E'(x)$ for each selected grid point $x$. For each selected grid point $y$ within a constant factor expansion of $E'(x)$, we test whether $E'(x) \cap E'(y) \neq \emptyset$. If so, we add the edge $(x, y)$ in $G_\eps(K)$.

The overall running time is dominated by the product of the number of grid points and the $O(1/\eps^{O(d)})$ time to compute each Macbeath ellipsoid.
\end{proof}

\end{document}